\definecolor{mygray}{gray}{0.6}
\def\tcg {\textcolor{black}}
\def\B{\mathcal{B}}
\def\cA{{\mathcal{A}}} \def\cB{{\mathcal{B}}}  
  \def\cG{{\mathcal{G}}} 
   \def\cP{{\mathcal{P}}}
  \def\cS{{\mathcal{S}}} 
\def\cU{{\mathcal{U}}}
   \def\bx{{\mathbf{x}}} \def\by{{\mathbf{y}}}
   \def\bX{{\mathbf{X}}}
     \def\d4{\!\!\!\!}
 \def\bPhi{\mathbf{\Phi}}    
\def\Lam{\Lambda}
\def\B{{\mathbb{B}}}
       \def\I{{\mathbb{I}}}
  \def\-{\! - \!}  \def\+{\! + \!}  \def\={\! = \!}  \def\>{\! > \!} \def\nn{\nonumber}
\def \log{\mathrm{log}}
\def\exp{\mathrm{exp}}
\newtheorem{lemma}{Lemma}
\newtheorem{remark}{Remark}
\newcommand{\bef}{\begin{figure}}
\newcommand{\eef}{\end{figure}}
\newcommand{\beq}{\begin{eqnarray}}
\newcommand{\eeq}{\end{eqnarray}}
\title{Link Quality-Guaranteed Minimum-Cost Millimeter-Wave Base Station Deployment }
\author{
Miaomiao Dong, Taejoon Kim, Minsung Cho, Kangeun Lee, and Sungrok Yoon  \thanks{M. Dong is with the Department of Electrical Engineering, City University of Hong Kong and the Department of Electrical Engineering and Computer Science, University of Kansas, Lawrence, KS, USA, emails: miao4600@163.com. T. Kim is with the Department of Electrical Engineering and Computer Science, University of Kansas, Lawrence, KS, USA, email: taejoonkim@ku.edu. M. Cho, K. Lee, and S. Yoon are with Samsung Electronics, Suwon, Korea, emails: \{ke.lee, msstar.cho, sr.eric.yoon\}@samsung.com.

This work was supported in part by Samsung.
The work of Taejoon Kim was supported in part by the National Science Foundation (NSF) under Grants CNS1955561 and AST2037864.}
}
\begin{document}
\maketitle


\begin{abstract}
Today's growth in the volume of wireless devices coupled with the promise of supporting data-intensive 5G-\&-beyond use cases is driving the industry to deploy more millimeter-wave (mmWave) base stations (BSs). Although mmWave cellular systems can carry a larger volume of traffic, dense deployment, in turn, increases the BS installation and maintenance cost, which has been largely ignored in their utilization.
In this paper, we present an approach to the problem of mmWave BS deployment in urban environments by minimizing BS deployment cost subject to BS association and user equipment (UE) outage constraints.
\tcg{By exploiting the macro diversity, which enables each UE to be associated with multiple BSs, we derive an expression for UE outage that integrates} physical blockage, UE access-limited blockage, and signal-to-interference-plus-noise-ratio (SINR) outage into its expression.
The minimum-cost BS deployment problem is then formulated as integer non-linear programming (INP).
The combinatorial nature of the problem motivates the pursuit of the optimal solution by decomposing the original problem into the two separable subproblems, i.e., cell coverage optimization and minimum subset selection subproblems.
We provide the optimal solution and theoretical justifications of each subproblem.
The simulation results demonstrating UE outage guarantees of the proposed method are presented.
Interestingly, the proposed method produces a unique distribution of the macro-diversity orders over the network that is distinct from other benchmarks.

\begin{IEEEkeywords}
Base station deployment, physical blockage, user equipment (UE) outage, UE access-limited blockage, minimum-cost base station deployment, integer nonlinear programming.
\end{IEEEkeywords}
\end{abstract}

\section{Introduction}~\label{sec:introduction}

Communications in the millimeter-wave (mmWave) bands will play a crucial role in facilitating the data-intensive fifth-generation (5G) use cases, including real-time  machine-type communications (MTC), interactive on-line learning, and augmented reality (AR) streaming.
The potential of the mmWave band has made it as one of the important aspects of future cellular networks \cite{Kims_gen, mmWave_magzine, mmWave_magzine2}.
While it is true that mmWave bands provide very high rate connectivity, contrary to general belief, this does not necessarily translate to high achievable throughput due to significant differences between systems operating in mmWave and legacy sub-$6$GHz bands.

\subsection{\tcg{MmWave Channel Access Challenges}}\label{challenges}
\tcg{The initial channel access in the mmWave cellular environment is a very critical problem, especially, using directional analog-digital  beamforming \cite{HealmmWave,Hadi2016,BSCap_limit2, Zhang18,aych_RFChain,Alkhateeb}.
The reliable user equipment (UE) access is critically limited by physical blockage and UE-access saturation in dense urban environments.
Specifically, weak diffraction and penetration due to the high pathloss in mmWave bands make channels susceptible to \emph{physical blockage} by random obstacles \cite{Link_status, RobertHealth1}.
If the channel does not experience physical blockage, the UE can attempt channel access to BS. 
However, the number of concurrently served UEs by a BS, which is equipped with hybrid analog-digital multiple-input multiple-output (MIMO) arrays, is strictly limited by the number of available radio frequency (RF) chains   \cite{HealmmWave,Hadi2016,BSCap_limit2,Zhang18,aych_RFChain,Alkhateeb}. When the number of active UEs in a cell is larger than the number of RF chains, \emph{UE access-limited blockage} occurs.}
Even without the physical and {UE access-limited} blockages, accumulated interference from surrounding mmWave small-cells can potentially lower the signal-to-interference-plus-noise ratio (SINR) of each UE, causing \emph{SINR outage} \cite{Mac_div2}.
Both physical and UE access-limited blockages, as well as the SINR outage, will lead to an unsatisfactory user experience.

Recently, the physical blockage challenge has been addressed by introducing macro diversity \cite{Mac_div1, UEAssoc_2,UEAssoc_3} that allows for each UE in an area to be covered by multiple BSs.
Provided macro diversity, if a link of a UE is blocked, the link can be restored by another BS that also covers the UE.
However, this benefit comes at the price of the growing number of deployed BSs.
The recent increase in the cost of deploying and maintaining small-cell BSs is a practical concern that wireless service providers are constantly facing \cite{whitepapaer_smallcell}.

\subsection{\tcg{Related Work}}
An efficient strategy for BS deployment is minimizing the number of deployed BSs subject to per-UE quality constraints.
In this category, BS density optimization \cite{Stoc_BS1, Stoc_BS1_ZhaoXidian, Stoc_BS3,Stoc_BS4,Stoc_BS5,my_Access}  and site-specific BS deployment  \cite{MinBSnum_VT,sub6_siteBS3,sub6_siteBS2,sub6_siteBS1} methods have been previously studied.
The BS density optimization methods \cite{Stoc_BS1, Stoc_BS1_ZhaoXidian, Stoc_BS3, Stoc_BS4, Stoc_BS5,my_Access}  rely on stochastic geometry to find minimum BS density subject to cell coverage  constraints.
While these prior works \cite{Stoc_BS1, Stoc_BS1_ZhaoXidian, Stoc_BS3, Stoc_BS4, Stoc_BS5,my_Access}  provide theoretical insights into the distribution of BSs, they rather fit traditional macro-cellular environments.
In the context of small-cell, the site-specific BS deployment methods \cite{MinBSnum_VT,sub6_siteBS3,sub6_siteBS2,sub6_siteBS1} have been studied, which find the minimum number of BSs installed on predetermined candidate locations.
The underlying assumption of these works was omnidirectional and penetrable wave propagation in the sub-$6$GHz bands, which cannot be extended to mmWave.

Incorporating mmWave pathloss models, site-specific BS deployment techniques have been studied to maximize line-of-sight (LoS) link distance given a fixed number of BSs \cite{Site_BS2, Site_BS3, Site_BS4, 7794888}. \tcg{While the works in \cite{Site_BS2, Site_BS3, Site_BS4, 7794888} considered general urban geometry, the work in \cite{ZhangYue} focused on a specific Manhattan-type geometry to maximize the 
macro diversity. 
Recently, the minimum-number BS deployment problem  has been studied for guaranteeing the average receive signal power \cite{mmWaveSiteDeploy}, link-blockage tolerance  level \cite{my_PartialSamplingBS}, and beam alignment reliability due to random UE rotation \cite{MmWaveBsDeploy_UEOre}.
Unlike the prior works  \cite{Site_BS2, Site_BS3, Site_BS4, 7794888, mmWaveSiteDeploy, MmWaveBsDeploy_UEOre}, the methodologies in \cite{my_PartialSamplingBS,ZhangYue} integrate the mmWave physical blockage and UE access-limited blockage models into their problem formulations.
However, these works either are limited to a specific Manhattan-type geometry \cite{ZhangYue} or ignore important link-quality measures \cite{my_PartialSamplingBS,ZhangYue}, such as SINR that must be taken in when designing and evaluating the performance of a BS deployment method.}
Thus, it is of great interest to develop a rather pragmatic strategy
that accounts for various mmWave-link-quality-related constraints.
Such approaches must be flexible to be applicable to any urban geometry and can incorporate both the installation cost and the number of BSs into its objective \cite{whitepapaer_smallcell}.


In this paper, we consolidate the latter missing components and formulate the BS deployment problem into
 integer nonlinear programming (INP).
About existing work, this aspect has some similarities to the  well-investigated problems on optimizing BS sleeping and user association at sub-$6$GHz bands.
The latter problems were often addressed by formulating INP with the objectives of maximizing network throughput \cite{sub6_siteBS5,lagrangian_dual} or minimizing the power consumption \cite{sub6_siteBS3,sub6_siteBS2,sub6_siteBS1,sub6_siteBS4,BSSleep_Femto}. 
\tcg{Because large-scale INP is NP-hard and is generally very difficult to be optimally solved}, devising suboptimal but efficient algorithms was the focus of these approaches, by using  greedy heuristics \cite{sub6_siteBS1, sub6_siteBS2,sub6_siteBS3}, Lagrangian dual \cite{BSSleep_Femto,lagrangian_dual}, and sequential subproblem formulations \cite{sub6_siteBS4,sub6_siteBS5}.
Though these subproblem frameworks were largely benefited from the deterministic link models at sub-$6$GHz bands, such models and associated problem formulations cannot be extended to the mmWave.
Although prior approaches \cite{sub6_siteBS1, sub6_siteBS2,sub6_siteBS3,sub6_siteBS4,sub6_siteBS5,BSSleep_Femto, lagrangian_dual} deal with INP,
they are different from our proposed approach in terms of the objective functions, associated constraints, and thereby, the developed algorithms.

\subsection{\tcg{Overview of Methodology and Contributions}}
\tcg{We address important mmWave connectivity challenges in a 3-dimensional (3D) urban  geometry  by proposing a link quality-guaranteed minimum-cost mmWave BS deployment technique.
Our contributions are summarized below.}
\begin{itemize}
  \item
  \tcg{We introduce a mmWave link state model that captures the \emph{randomness} of physical blockage, UE access-limited blockage, and SINR outage events.
  The model accounts for random locations of UEs and obstacles in stochastic geometry, which allows us to mathematically formulate the UE outage and BS association constraints.
  However, these expressions are not analyzable and difficult to be used for formulating optimization problems.
  We show instead that the UE outage constraint can be upper bounded to derive a tractable expression.
  Moreover, we verify that the UE access-limited blockage constraint can be equivalently transformed to an analytic (linear) model, facilitating a tractable optimization framework.}

\item \tcg{Next, we present the  minimum-cost mmWave BS deployment problem as large-scale INP.
However, the formulated INP is difficult to be solved directly.
While this could motivate us to pursue suboptimal algorithms 
(e.g., \cite{sub6_siteBS1, sub6_siteBS2,sub6_siteBS3,BSSleep_Femto,lagrangian_dual,sub6_siteBS4,sub6_siteBS5,mmWaveSiteDeploy,my_PartialSamplingBS}), we instead show that the formulated INP can be decomposed into two \emph{separable} subproblems, i.e., (i) cell coverage optimization problem and (ii) minimum subset BS selection problem, and  optimally solved by sequentially solving these two subproblems.
In particular, we show that the second subproblem (minimum subset BS selection problem), which is also INP, can be transformed into a linear equivalent form that can be efficiently solved via existing software.}

\item
\tcg{Finally, we evaluate the performance of our proposed designs via numerical results.
We show that the proposed scheme provides UE outage guarantees with the minimum-cost BS deployment, in contrast to other benchmark schemes \cite{mmWaveSiteDeploy,MmWaveBsDeploy_UEOre}.
An interesting aspect of the proposed scheme is that the optimized results present a unique distribution of the macro-diversity orders over the network that lowers the concentration of macro-diversity orders compared to other benchmark schemes in \cite{mmWaveSiteDeploy,MmWaveBsDeploy_UEOre}.
This is the underlying reason of the improved performance guarantee compared to the benchmarks, while deploying a reduced number of BSs.}

\end{itemize}

The rest of the paper is organized as follows.
We present the system models in Section~\ref{sec_system_setup}.
The UE outage-guaranteed minimum-cost mmWave BS deployment problem is formulated in Section~\ref{sec_ProblemForm}.
The  proposed algorithm that solves the problem is enunciated in Section~\ref{sec:two_methods}.
The simulation results and conclusions are provided in Section \ref{Sec_Simulation} and \ref{Sec_con}, respectively.
\tcg{For ease of reference, TABLE~\ref{tab:notation} summarizes the
main variables which will be used throughout this paper.}



\section{System Models}    \label{sec_system_setup}

In this section, we provide  a mathematical description for the  mmWave  cellular environment and system model under consideration.

\begin{table}[t]
\footnotesize  
  \caption{List of Main Variables and Their Physical Meanings}
  \centering
  \begin{tabular}{l|l}
\hline
\textbf{Variable}&
\textbf{Description} \\
\hline
\hline
$G_{\text{main}}$ &
Mainlobe beam gain \\
\hline
$G_{\text{side}}$ & Sidelobe beam gain \\
\hline
$H_{\text{BS},b}$&
Height of BS $b$\\
\hline
$H_{\text{UE}}$&
Height of UEs\\
\hline
$I_{i,g}$&
Interference power from BS $i$ to UE $g$\\
\hline
$L_{\text{grd}}$&
Length of square grid\\
\hline
${n}_b\left(\bx_b \right)$ &
Number of active UEs  in cell $b$ with area $\sum_{g} \!x_{b,g}L_{\text{grd}}^2$\\
\hline
$N_{\text{RF}}$ &
Number of RF chains at each BS \\ 
\hline
$P_{b,g}$&
Transmit power of link from BS $b$ to UE $g$\\
\hline
 $p_{b,g}^{\text{blk}}$ &
Physical blockage probability of link from BS $b$ to UE $g$ \\
\hline
 $p_{b,g}^{\text{out}}$ &
Outage probability of link from BS $b$ to UE $g$ \\
\hline
$\text{PL}_{b,g}$&
 Pathloss of the link from BS $b$ to grid $g$\\
\hline
$R^{\text{max}}$ &
 Maximum allowed link distance for reliable communications \\
\hline
$r_{b,g}$&
Distance of the link from BS $b$ to grid $g$  \\
\hline
$r_{b}^{\text{max}}$&
Maximum link distace in cell $b$  \\
\hline
$\text{SINR}_{b,g}$&
SINR of link from BS $i$ to UE $g$\\
\hline
$\mathbf{X}\!\in\!\mathbb{B}^{B\!\times \!G}$ &
Association matrix\\
\hline
$\mathbf{x}_b\!\in\!\mathbb{B}^{1\!\times \!G}$ &
The $b$th row of $\mathbf{X}$\\
\hline
$\mathbf{y}\!\in\!\mathbb{B}^{1\!\times \!B}$ &
BS deployment vector\\
\hline
$z$&
SINR threshold\\
\hline
$\lambda_{\text{UE},g}$
&
UE density in grid $g$, \\
\hline
$\zeta$ &
UE outage tolerance   \\
\hline
 $\gamma$
& UE access-limited blockage   tolerance \\
\hline
 $\sigma^2$&
Noise power  \\
\hline \hline
\end{tabular}
\label{tab:notation}
\end{table}

\subsection{Urban Geometry}
\tcg{We consider a 3D urban geometry, for example, as illustrated in Fig.~\ref{fig:geometry}, consisting of buildings and streets.
In Fig.~\ref{fig:geometry}, mmWave BSs are mounted  on  walls of the buildings to serve outdoor active UEs on the streets in the downlink.}
\tcg{We assume that the  candidate BS locations are predetermined as red dots in Fig.~\ref{fig:geometry} and let the indices of the candidate BS locations be  $\mathcal{B}\!=\!\{1,2,\ldots,B\}$. Each candidate location has the height $H_{\text{BS},b},\forall b\in\mathcal{B}$.
If a BS is installed at the $b$th ($b\in\cB$) location,  $y_b=1$ and otherwise, $y_b=0$, where $y_b\in \mathbb{B}=\{0,1\}$ is the $b$th entry of the BS deployment vector $\mathbf{y}=[y_1,\ldots, y_B] \in \B^{1\times B}$.
Hereafter, a BS deployed at the $b$th candidate location is called ``BS $b$".}

Each mmWave BS has its coverage area, called a cell. Each BS serves outdoor active UEs inside its cell.
Due to the physical blockage by obstacles and buildings, the shape of a cell is irregular.
In our approach, to capture the irregularity of each cell, the whole outdoor area is divided into $G$ square grids in Fig.~\ref{fig:geometry}.
The location of each grid is represented by its center point. We let the index set of the grids be $\mathcal{G}\!=\!\{1,2,\ldots,G\}$. 
For simplicity,
we call a UE in grid $g$ as ``UE~$g$".
\tcg{An association indicator} $x_{b,g}\in \B$ is introduced, where  $x_{b,g}\!=\!1$ if the grid $g\in\cG$ resides in the cell of BS $b$, and $x_{b,g}=0$, otherwise. 
The cell area associated with BS $b$ then equals to $\sum_{g} x_{b,g} L_{\text{grd}}^2$, where $L_{\text{grd}}$ is the length of the side of a square grid in Fig. \ref{fig:geometry}.
The overlap between cells is allowed so that some UEs in a cell can be simultaneously covered by more than one BS (i.e., macro diversity)
\beq 
\sum_{b\in \cB} x_{b,g} \geq 1 . \label{macro_diversity}
\eeq
All \tcg{association  indicators} $\{ x_{b,g} \}$ are collected into \tcg{an association  matrix} $\mathbf{X}\in\mathbb{B}^{B\times G}$, where its $b$th row and $g$th column entry is $x_{b,g}$.
We assume that UEs have the same height $H_{\text{UE}}$ and  $H_{\text{UE}} \!< \! H_{\text{BS},b}, \forall b\in\mathcal{B}$.
\tcg{UEs in the outdoor area are randomly distributed and their placements can be accurately modeled by a non-homogeneous Poisson point process (PPP) with a certain UE density distribution \cite{NonPPP1,NonPPP2}.
For ease of exposition, when the partitioned square grid in Fig.~\ref{fig:geometry} is small enough, we assume that the users in a grid $g$ are uniformly distributed with a density $\lambda_{\text{UE},g}$, $g\!\in\! \cG$. The different grid can have different UE density $\lambda_{\text{UE},g}$.}
\tcg{This approximation becomes
exact as the square grid reduces to a point.}
\begin{figure}
  \centering
  \includegraphics[width=0.48\textwidth]{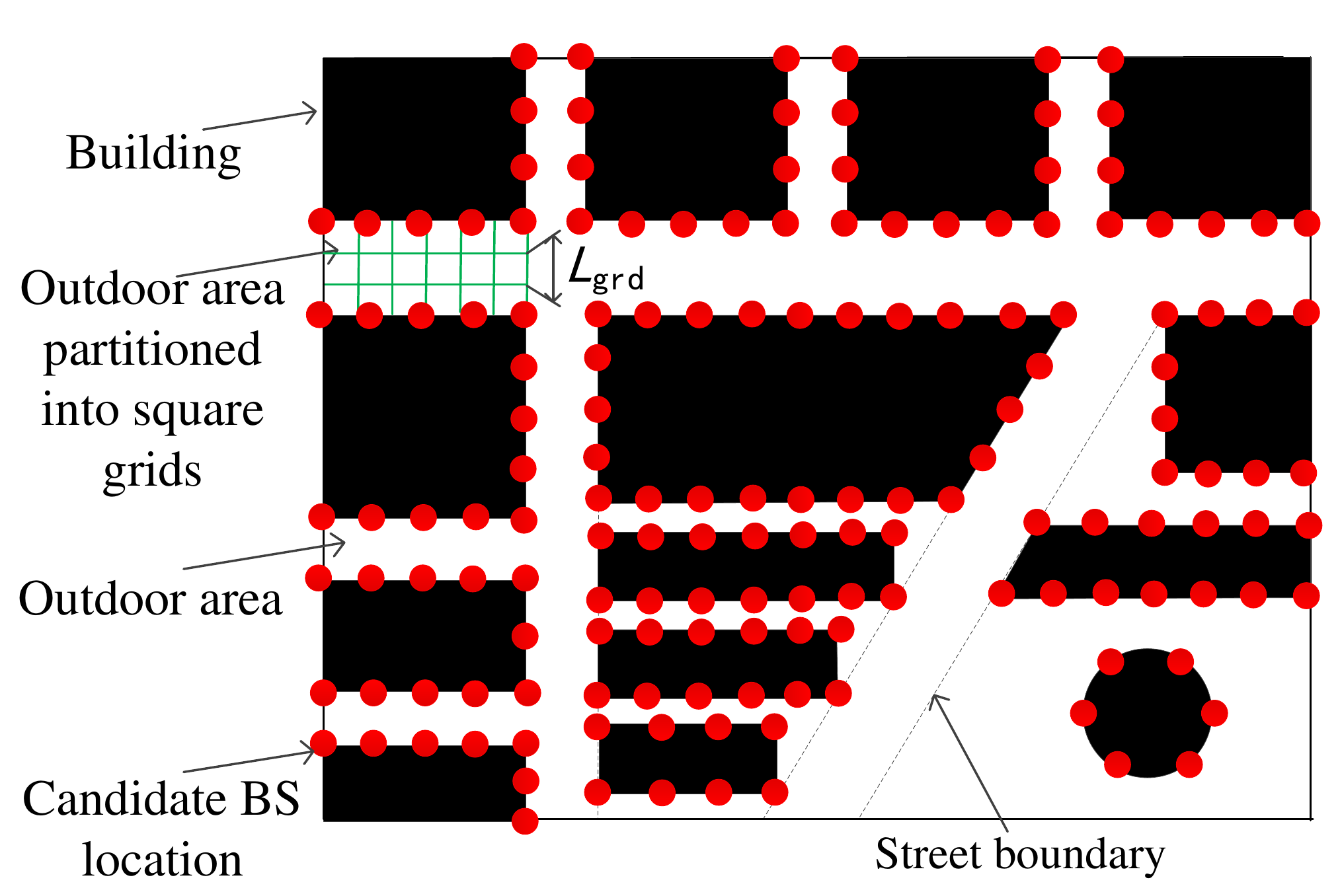}
  \caption{\tcg{Bird's-eye view of an exemplary urban street geometry for mmWave BS deployment.}}
  \label{fig:geometry}
\end{figure}

\subsection{Pathloss Model}   \label{Link_model}

\tcg{MmWave propagation experiencing severe pathloss and atmospheric impairments exhibits sparse multi-path channels.
In particular, weak diffraction and penetration at mmWave frequencies make the non-LoS (NLoS) paths suffer from much severer attenuation than the LoS paths.
As a result, the quality of  a mmWave link is dominantly determined by its LoS link \cite{MyRay_tracing}. Hence, in this work, we mainly focus on the LoS path.
In the recent 3GPP specifications (Release 15 \cite{3GPP38211}), the $28$ GHz mmWave band has been considered as one of the standard frequencies for 5G cellular communications.  Throughout the paper, we will use the $28$ GHz LoS pathloss model based on the measurement campaign conducted in an urban area  \cite{Link_status}, which is characterized by
\begin{eqnarray}
 \text{PL}_{b,g}= 10^{-3.24 - 2.1 \cdot \log_{10}(r_{b,g})- 2.0 \cdot \log_{10}(28)},
\label{pathloss}
\end{eqnarray}
where $\text{PL}_{b,g}$ is the pathloss  from BS $b$ to a UE  $g$, and $r_{b,g}$ is the link distance in meters.}
\tcg{We assume that the link distance $r_{b,g}$ is limited by a constant $R^{\text{max}}$,  $r_{b,g}\leq R^{\text{max}}$, where $R^{\text{max}}$ is the maximum allowed mmWave link distance for reliable communications.\footnote{Based on the outdoor mmWave measurements in \cite{Link_distance}, for example, a $200$ meter LoS link in the range of $28$GHz system was measured to be extremely unreliable for communication, thereby $R^{\text{max}}=200$.} Thus, we assume any link with the distance $r_{b,g}> R^{\text{max}}$ is in outage.}

\subsection{Hybrid Array and Beam  Pattern}   \label{Beampattern}

\subsubsection{Hybrid Analog-Digital Arrays}
The mmWave frequencies force systems to use large-sized antenna arrays to generate highly directional narrow beams in order to overcome the propagation impairments \cite{Kims_gen}.
To make the large-sized arrays available at low cost, analog arrays that are driven by a limited number ($N_{\text{RF}}$)  of RF chains are typically employed.
This system is referred to as a hybrid analog-digital mmWave MIMO system \cite{HealmmWave,Hadi2016,BSCap_limit2, Zhang18,aych_RFChain,Alkhateeb}, which is the major realization technology of 5G mmWave BS systems. Throughout the paper, we assume such hybrid mmWave MIMO BS systems.


\subsubsection{Beam Pattern}
For ease of exposition, we assume a simplified directional beam pattern, also called the cone-shaped beam pattern, at BSs as follows
\begin{equation}
\begin{small}
\textstyle {G_{\text{BS}}(\theta,\phi) \!=\!}
\textstyle \left\{
\begin{array}{ll}
\textstyle \!\!\!G_{\text{main}},                 \!& \text{if}\; |\theta| \leq \frac{\Delta_{\theta}}{2},\;   |\phi| \leq \frac{\Delta_{\phi}}{2} \\
\!\!\! G_{\text{side}},   \!\!& \textstyle \text{otherwise}
\end{array},\label{BeamGain}
\right.
\end{small}
\end{equation}
where $G_{\text{main}}$ and $G_{\text{side}}$ are the array gains at the mainlode and sidelobe, respectively. Then, $G_{\text{BS}}(\theta,\phi)$ represents the beam gain at elevation $\theta$ and azimuth $\phi$ directions.
The $\Delta_{\theta}$ and $\Delta_{\phi}$ in \eqref{BeamGain} are, respectively, the beamwidths at the elevation $\theta$ and azimuth $\phi$ directions.
Though model (\ref{BeamGain}) is an approximation, it can be shown that \eqref{BeamGain} closely approach to the actual beam pattern as the array size increases \cite{Mac_div2, RobertHealth2,BeamPattern}.
The BS aligns its beam to provide  beamforming gain $G_{\text{main}}$ to a UE.
In this work, we assume a single antenna UE that generates an omnidirectional receive beam.

\section{Link Analysis and Problem Formulation}   \label{sec_ProblemForm}
\tcg{In this section, we identify several key constraints and the statistics of elementary events, including  the physical blockage probability, BS association constraints, UE access-limited blockage probability, and SINR outage probability that are used to formulate the UE outage constraint. 
These are then combined to define the mmWave BS deployment problem at the end of this section.}

\subsection{Physical Blockage Probability} \label{PhyBlock}
The physical blockage of mmWave links depends not only on the distance of a link, but also on the density and sizes of random obstacles \cite{twoBall_model, Mac_div2}.
The random obstacles are assumed to be impenetrable cubes, based on the Boolean scheme, and the placement of each obstacle follows a homogeneous PPP with the density $\lambda_{\text{obs}}$ \cite{RobertHealth1}.
The physical blockage probability  with the path length  $r_{b,g}$   is then given by 
\begin{equation}
\d4  p_{b,g}^{\text{blk}}  \!=\!
\textstyle \!\left\{
\begin{array}{ll}
\textstyle \!\!1\!-\!\exp(-\beta r_{b,g}\!-\!\alpha)     \!\!\! \!\!&,~\text{if $\exists$ LoS path} \\
 \!\!1   \!\!\!&,~ \textstyle \text{otherwise}
\end{array},   \hspace{-0.3cm}
\right.
\label{LoS_Blockage}
\end{equation}
where $r_{b,g}\leq R^{\text{max}}$ and \tcg{$\alpha$ and $\beta$ are parameters that depend on the density and sizes of the obstacles \cite{RobertHealth1}. \tcg{The variables in \eqref{LoS_Blockage} follow $\beta\!=\!2\lambda_{\text{obs}} \frac{ E[L_{\text{obs}}] + E[W_{\text{obs}}] }{\pi}\eta$ and $\alpha =\lambda_{\text{obs}} E[L_{\text{obs}}]E[W_{\text{obs}}]$, where $E[L_{\text{obs}}]$ and $E[W_{\text{obs}}]$ are the expected length and width of obstacles, respectively, and $\eta=1\!-\!\int_{0}^1\int_{0}^{sH_{\text{UE}}+(1-s)H_{\text{BS}}} f_h(x) dxds$ where $f_h(x)$ is the probability density function of the height of an obstacle.}
The intuition behind  \eqref{LoS_Blockage} is that the denser the obstacle distribution and the larger the obstacle sizes, the larger $\alpha$ and $\beta$ values will be, resulting in a higher blockage probability.}

\subsection{BS Association Constraints }  \label{cell_coverage}
 \tcg{In this subsection, we list the rules for elements in the association matrix $\bX$. These elements will be used for formulating the UE access-limited blockage and SINR outage probabilities in the next two subsections.}

If a BS is installed at the $b$th candidate location ($y_b\!=\!1$), the association variable is either $x_{b,g}=1$ or $x_{b,g}=0$, while if  $y_b\!=\!0$, then $x_{b,g}\!=\!0$, yielding
\begin{eqnarray}
x_{b,g}\leq  y_{b}, \quad  b\in\mathcal{B},\;  g\in \mathcal{G}.
\label{xy_relation}
\end{eqnarray}
\tcg{Based on the physical blockage model in the previous subsection, it is evident that necessary conditions for $x_{b,g}=1$ are: (i) the link distance $r_{b,g}\leq R^{\text{max}}$ and (ii) the physical blockage probability $p_{b,g}^{\text{blk}}<1$ in \eqref{LoS_Blockage}, which leads to
\beq
x_{b,g} \leq \mathbb{I}_{\left\{{{ r_{b,g}\leq R^{\text{max}}, \;p_{b,g}^{\text{blk}}}<1 } \right \}}, \; b\in\mathcal{B}, g\in\mathcal{G},
\label{LoS_visible_cons}
\eeq
where $\mathbb{I}_{ \left\{ \cA \right\} }$ is an indicator function: $\mathbb{I}_{ \left\{ \cA \right\} } = 1$ if the event $\cA$ is true, and $\mathbb{I}_{ \left\{ \cA \right\} }=0$ otherwise.
We assume that the grids closer to a BS have the priority to be associated with the BS. This is to say, whenever a grid $g$ is served by BS $b$ ($x_{b,g}=1$), other grid $s\in\cG$ with $r_{b,s}\leq r_{b,g}$ and $p_{b,s}^{\text{blk}}<1$ should also be served by the BS $b$, leading to
\beq
\mathbb{I}_{\left\{{{ r_{b,s}\leq r_{b,g}, \;p_{b,s}^{\text{blk}}}<1  } \right \}}x_{b,g} \leq x_{b,s}, \; b\in\mathcal{B}, g, s\in\mathcal{G}.
\label{LoS_visible_cons2}
\eeq}

\tcg{
The conditions in \eqref{xy_relation}, \eqref{LoS_visible_cons}, and \eqref{LoS_visible_cons2} will be  incorporated in our proposed BS deployment problem as the BS association constraints.}

\subsection{\tcg{UE Access-Limited Blockage Probability}}    \label{CapacityCons}

\tcg{When the mmWave channel between a UE and BS does not experience physical blockage, the UE can attempt channel access to the BS.  However, it can still be blocked due to saturated UE access within a cell.
More specifically, the maximum number of UEs that a BS can simultaneously serve is strictly limited by the number of RF chains $N_{\text{RF}}$ in hybrid MIMO BS systems.
UE access-limited blockage occurs whenever the number of active UEs in a cell is larger than $N_{\text{RF}}$.
This blockage has an intuitive interrelation between the UE density $\lambda_{\text{UE},g}$ and the number of grids covered by a BS $\sum_g x_{b,g}$.
For instance, if a BS covers a larger number of grids, it results in a higher probability of UE access-limited blockage. 
The same is true when the UE density per grid $\lambda_{\text{UE},g}$ grows.
To capture this interrelation and to use it for controlling the numbers of grids covered by BSs, we now derive the UE access-limited blockage probability between BS $b$ and a UE in grid $g$ denoted by $\rho_{b, g}(\bx_b)$, where $\bx_b = [x_{b,1}, \ldots, x_{b,G}]\in \B^{1 \times B}$ is the $b$th row of the association matrix $\bX$.}


\tcg{We let ${n}_b\left(\bx_b \right)$ be the number of active UEs without physical blockage in the cell area $\sum_{g} \!x_{b,g}L_{\text{grd}}^2$ of BS $b$. 
Then, the  UE access-limited blockage at BS $b$ occurs when ${n}_b\left(\bx_b \right)>N_{\text{RF}}$, in which  ${n}_b\left(\bx_b \right)$ is a random variable that depends on UE distribution and random physical blockage. Assume that each of the  ${n}_b\left(\bx_b \right)$ UEs has the equal probability to have successful channel access without UE access-limited blockage. For a given ${n}_b\left(\bx_b \right)$ with  ${n}_b\left(\bx_b \right)>N_{\text{RF}}$, the UE $g$ is then in UE access-limited blockage with probability 
$$\frac{{n}_b\left(\bx_b \right)-N_{\text{RF}}}{{n}_b\left(\bx_b \right)}.$$
The UE access-limited blockage probability $\rho_{b, g}(\bx_b) $ is therefore given by  
\beq
\rho_{b, g}(\bx_b) = \sum\limits_{\forall x\; \text{with}  \;x>N_{\text{RF}}}  \Pr({n}_b\left(\bx_b \right)=x)\frac{x-N_{\text{RF}}}{x}
\label{eqn:UeAccessLimitedBlockPro1}
\eeq
}


Identifying  $\Pr({n}_b\left(\bx_b \right)=x)$ in \eqref{eqn:UeAccessLimitedBlockPro1}
requires the distribution of $n_b(\mathbf{\bx_b})$.
By leveraging the independent thinning property of PPP \cite{thinning}, the number of active UEs, associated with BS $b$, without physical blockage per grid $g$ is Poisson distributed with the mean $\lambda_{\text{UE}, g}L^2_{\text{grd}}(1-p_{b,g}^{\text{blk}}).$ 
Because the sum of independent Poisson random variables is still Poisson,  ${n}_{b}(\bx_b)$ is Poission-distributed with the mean
\begin{eqnarray}
 E[{n}_{b}(\bx_b)]=\sum_{g\in\mathcal{G}} x_{b,g}\lambda_{\text{UE},g}L^2_{\text{grd}}(1-p_{b,g}^{\text{blk}}).
\label{Mean_k_b}
\end{eqnarray}
The closed-form expression of (\ref{eqn:UeAccessLimitedBlockPro1}) is therefore given by
\begin{eqnarray}
\rho_{b, g}(\bx_b) \!=\!\sum_{i=N_{\text{RF}}+1}^{+\infty}  \! \! \frac{E[{n}_b(\bx_b)]^{i}}{i!}e^{-E[{n}_b(\bx_b)]}\frac{i-N_{\text{RF}}}{i} .
\label{express1}
\end{eqnarray}

The following lemma characterizes the relationship between the UE access-limited blockage probability in \eqref{express1} and the cell coverage and UE density.


\begin{lemma}
\label{lemma2}
\tcg{The UE access-limited blockage probability $\rho_{b, g}(\bx_b)$ in \eqref{express1} is a monotonically increasing function of the cell coverage (i.e., $\sum_{g}x_{b,g}$) and/or UE density $\lambda_{\text{UE},g}$.} 

\end{lemma}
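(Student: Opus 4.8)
The plan is to reduce the two-parameter monotonicity claim to a single-variable one. Observe that the expression in \eqref{express1} depends on $\bx_b$ and on the densities $\{\lambda_{\text{UE},g}\}$ only through the scalar mean $\mu := E[{n}_b(\bx_b)]$ of \eqref{Mean_k_b}. Since $0 \le p_{b,g}^{\text{blk}} \le 1$ and $\lambda_{\text{UE},g}, L_{\text{grd}}^2 \ge 0$, every summand $x_{b,g}\lambda_{\text{UE},g}L_{\text{grd}}^2(1 - p_{b,g}^{\text{blk}})$ in \eqref{Mean_k_b} is nonnegative. Hence flipping any association indicator $x_{b,g}$ from $0$ to $1$ (i.e., enlarging the cell coverage $\sum_g x_{b,g}$), or increasing any UE density $\lambda_{\text{UE},g}$, can only increase $\mu$. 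Writing $\rho$ as a function $\rho(\mu)$, it therefore suffices to prove that $\rho(\mu)$ is increasing in $\mu > 0$; the claimed monotonicity in coverage and/or density then follows by composition.

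Next I would establish $\rho'(\mu) > 0$ by term-by-term differentiation. Let $p_i(\mu) = \frac{\mu^i e^{-\mu}}{i!}$ and $a_i = \frac{i - N_{\text{RF}}}{i}$, so that $\rho(\mu) = \sum_{i = N_{\text{RF}}+1}^{\infty} a_i p_i(\mu)$. Since $|a_i| \le 1$, the series $\sum_i a_i \mu^i/i!$ has infinite radius of convergence and may be differentiated term by term; using the standard Poisson identity $p_i'(\mu) = p_{i-1}(\mu) - p_i(\mu)$ I would write
\[
\rho'(\mu) = \sum_{i = N_{\text{RF}}+1}^{\infty} a_i\big(p_{i-1}(\mu) - p_i(\mu)\big).
\]

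The key step is a summation-by-parts (Abel) rearrangement of this difference-structured sum. Shifting the index in the $p_{i-1}$ terms and recollecting, I expect to obtain
\[
\rho'(\mu) = a_{N_{\text{RF}}+1}\, p_{N_{\text{RF}}}(\mu) + \sum_{i = N_{\text{RF}}+1}^{\infty}\big(a_{i+1} - a_i\big)\, p_i(\mu),
\]
after which I would check that every term is nonnegative: $a_{N_{\text{RF}}+1} = \frac{1}{N_{\text{RF}}+1} > 0$ and $a_{i+1} - a_i = \frac{N_{\text{RF}}}{i(i+1)} \ge 0$, while $p_i(\mu) > 0$ for $\mu > 0$. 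This yields $\rho'(\mu) > 0$ for all $\mu > 0$, which completes the argument through the reduction above.

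The main obstacle is analytic rather than conceptual: justifying the interchange of $\frac{d}{d\mu}$ with the infinite sum, and the validity of the Abel rearrangement. Both follow from absolute and locally uniform convergence, since $\sum_i p_i(\mu) = 1$ and $\sum_i p_{i-1}(\mu) = 1$ dominate the relevant series on compact $\mu$-intervals, but this should be stated carefully rather than taken for granted. A minor point worth recording is how the reduction handles the ``and/or'': because both coverage and density act monotonically on the single quantity $\mu$, simultaneous increases compose without interaction, so no joint argument beyond monotonicity of $\rho(\mu)$ is required.
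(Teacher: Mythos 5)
Your proof is correct and takes essentially the same route as the paper's: both reduce the claim to monotonicity of $\rho_{b,g}$ in the scalar mean $E[{n}_b(\bx_b)]$ (which is trivially increasing in coverage and density since every summand in \eqref{Mean_k_b} is nonnegative), then differentiate the Poisson series term by term and show nonnegativity via the same index shift—your Abel rearrangement giving $a_{N_{\text{RF}}+1}p_{N_{\text{RF}}}(\mu)+\sum_i (a_{i+1}-a_i)p_i(\mu)$ with $a_{i+1}-a_i=\frac{N_{\text{RF}}}{i(i+1)}\geq 0$ is exactly the paper's step (a). Your explicit justification of the interchange of differentiation and summation is a minor technical refinement that the paper leaves implicit.
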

\begin{proof}
\tcg{It is not difficult to observe from  \eqref{Mean_k_b} that  $E[{n}_{b}(\bx_b)]$ increases as the $\sum_{g}x_{b,g}$ and/or UE density $\lambda_{\text{UE},g}$ grow. Hence, the proof of the lemma boils down to showing that the $\rho_{b, g}$ in \eqref{express1} is a monotonically increasing function of  $E[{n}_b(\bx_b)]$. 
This can be verified by taking the first-order derivative of $\rho_{b, g}(\bx_b)$ with respect to $E[{n}_b(\bx_b)]$, yielding}
\begin{eqnarray} \nonumber
\textstyle
\frac{\partial \rho_{b, g}(\bx_b)}{\partial E[{n}_b(\mathbf{x}_b)]}
\!\!\d4&\= &\d4\!\! \textstyle \sum\limits_{i=N_{\text{RF}}+1}^{+\infty} \frac{ E[{n}_b(\mathbf{x}_b)]^{i-1}}{(i-1)!}e^{-E[{n}_b(\mathbf{x}_b)]}\frac{i-N_{\text{RF}}}{i}   \\    \nonumber 
\d4 &  &\d4 \hspace{1cm}- \! \textstyle \sum\limits_{i=N_{\text{RF}}+1}^{+\infty} \frac{E[{n}_b(\mathbf{x}_b)]^{i}}{i!}e^{-E[{n}_b(\mathbf{x}_b)]}\frac{i-N_{\text{RF}}}{i} \\ \nonumber
\d4 &\stackrel{(a)}\=&  \!\!\d4 \textstyle  e^{-E[{n}_b(\mathbf{x}_b)]}\!\! \left(\!\! \frac{E[{n}_b(\mathbf{x}_b)]^{N_{\text{RF}}}}{(N_{\text{RF}}+1)!} \!+\!\!\! \d4  \sum\limits_{i=N_{\text{RF}}+1}^{+\infty} \!\!\d4 \frac{E[{n}_b(\mathbf{x}_b)]^{i}}{i!} \!\left(\!\frac{N_{\text{RF}}}{i}\!\!-\!\!\frac{N_{\text{RF}}}{i+1} \!\right) \!\!\right) \;\;\\ \nonumber
&\geq& 0,
\end{eqnarray}
\tcg{where the step (a) follows from the fact that  $\sum\limits_{i=N_{\text{RF}}+1}^{+\infty} \!\!\!\! \frac{ E[{n}_b(\mathbf{x}_b)]^{i-1}}{(i-1)!}e^{-E[{n}_b(\mathbf{x}_b)]}\frac{i-N_{\text{RF}}}{i}$ in the first equality can be rewritten as}
$$\textstyle   e^{-E[{n}_b(\mathbf{x}_b)]} \frac{E[{n}_b(\mathbf{x}_b)]^{N_{\text{RF}}}}{(N_{\text{RF}}+1)!}  + \!\! \sum\limits_{i=N_{\text{RF}}+1}^{+\infty} \!\frac{ E[{n}_b(\mathbf{x}_b)]^{i}}{i!}e^{-E[{n}_b(\mathbf{x}_b)]}\frac{i+1-N_{\text{RF}}}{i+1}.  $$
This completes the proof.
\end{proof}


\subsection{SINR Outage Probability}    \label{SINRConst}

While a UE that does not experience physical and UE access-limited blockages can acquire initial access to BS, the acquired link can be unavailable due to intercell and intracell interference from surrounding BSs.
Hence, describing the SINR outage $\Pr\left(\text{SINR}_{b,g}(\by, \bX)\!<\! z| \mathbf{y},\mathbf{X}\!\right)$, where  $\text{SINR}_{b,g}(\by, \bX)$ is the SINR of a link from BS $b$ to a UE  $g$ and $z$ is the SINR threshold for reliable communications, is of interest. 
Directly analyzing the distribution of $\text{SINR}_{b,g}(\mathbf{y},\mathbf{X})$ in the mmWave environment is very difficult due to stochastic physical blockage and UE distribution.
In this subsection, we resort to a deterministic lower bound of $\text{SINR}_{b,g}$ to propose a closed-form approximate of $\Pr\left(\text{SINR}_{b,g}(\by, \bX)< z| \mathbf{y},\mathbf{X}\right)$.


We assume an equal power allocation per UE and write the desired signal power $P_{b,g}(\mathbf{x}_b)$ received at an active UE in grid $g$ from its serving BS $b$ as
\begin{eqnarray} \nonumber
  P_{b,g}(\mathbf{x}_b)\d4\d4&&=\frac{ x_{b,g} P_{\text{TX}} }{\min(n_b(\bx_b), N_{\text{RF}})}  G_{\text{main}} \text{PL}_{b,g}  \\
                         && \geq \frac{x_{b,g} P_{\text{TX}} }{N_{\text{RF}}}  G_{\text{main}} \text{PL}_{b,g} \;\triangleq \;\overline{P}_{b,g}( x_{b,g}),
\label{desirepower}
\end{eqnarray}
where $P_{\text{TX}}$ is the total BS transmit power, $\min(n_b(\bx_b), N_{\text{RF}})$ is the number of served active UEs by BS $b$,
$G_{\text{main}}$ follows (\ref{BeamGain}), and  $\text{PL}_{b,g}$ is in \eqref{pathloss}.
The last inequality in \eqref{desirepower} is due to $\min(n_b(\bx_b), N_{\text{RF}}) \leq N_{\text{RF}}$.
We now capture the composite link interference power under an assumption that the mainlobe of a 3D beam in \eqref{BeamGain} is perfectly aligned with the intended UE and is narrow enough not to cause interference to unintended UEs.
Thus, it is the sidelobe of the beam that causes interference with probability $1$.
A BS $i$ serving $\min(n_i(\bx_i), N_{\text{RF}})$ UEs has in total $\min(n_i(\bx_i), N_{\text{RF}})$ beams and can possibly impose interference $I_{i,g}(\mathbf{x}_b)$ to the UE in grid $g$ with $\min(n_i(\bx_i), N_{\text{RF}})-x_{i,g}$  interfering sidelobes
\tcg{if there exists an LoS path between the BS $i$ and UE $g$.}
In the case of $p_{i,g}^{\text{blk}}<1$ in (\ref{LoS_Blockage}), the interference $I_{i,g}(\mathbf{x}_b)$  is a Bernoulli random variable with its value either $0$ (blocked) or positive (unblocked), yielding
\begin{eqnarray} \nonumber
\label{Single_InterferUB}
I_{i,g}(\mathbf{x}_b)\d4\d4\d4 && =\!  \mathbb{I}_{ \left\{ \cA_{i,g}\right\} } \!\frac{\big(\!\min(n_i(\bx_i), N_{\text{RF}})\!-\!x_{i,g}\big)P_{\text{TX}}}{\min(n_i(\bx_i), N_{\text{RF}})} G_{\text{side}} \text{PL}_{i,g} \;\;\quad  \\  
\d4\d4 &&\leq\! \left(\!1\!-\!\frac{x_{i,g}}{N_{\text{RF}}}\right)P_{\text{TX}}G_{\text{side}} \text{PL}_{i,g} \!\triangleq \! \widehat{I}_{i,g}(x_{i,g}),
\label{single_inf}
\end{eqnarray}
where $\cA_{i,g}$ denotes the event that the LoS path  between  the  BS  $i$ and  UE $g$ is not physically blocked. 
\tcg{The last inequality in \eqref{single_inf} is due to the facts that $\mathbb{I}_{ \left\{ \cA_{i,g}\right\} } \leq 1$ and $\min(n_i(\bx_i), N_{\text{RF}})\leq N_{\text{RF}}$.  
When $n_i(\bx_i) = N_{\text{RF}}$, the  $\widehat{I}_{i,g}(x_{i,g})$ is the positive value of the Bernoulli random  $I_{i,g}(\mathbf{x}_b)$ and becomes tight if the link is blocked with a low probability.}
Accounting for (\ref{desirepower}) and (\ref{single_inf}), we have the lower bound  
\begin{eqnarray}
 \text{SINR}_{b,g}(\mathbf{y},\mathbf{X}) \geq \overline{\text{SINR}}_{b,g}(\mathbf{y},\mathbf{X}) \!=\!\frac{\overline{P}_{b,g}(x_{b,g}) }{\sigma^2 + \sum\limits_{i\in \mathcal{B}} y_i \widehat{I}_{i,g}(x_{i,g})}, 
\label{SINR_Exp}
\end{eqnarray}
where the $\sigma^2$ is the noise power. \tcg{The conditional probability of  $\overline{\text{SINR}}_{b,g}(\mathbf{y},\mathbf{X})<z$ given $\mathbf{y}$ and $\mathbf{X}$ is therefore given by
\begin{eqnarray}
\Pr\left(\overline{\text{SINR}}_{b,g}(\mathbf{y},\mathbf{X})\!<\! z| \mathbf{y},\mathbf{X}\!\right)=\I_{ \big\{ \overline{\text{SINR}}_{b,g}(\mathbf{y},\mathbf{X}) < z|\mathbf{y}, \mathbf{X} \big\} },
\label{eqn:pUB}
\end{eqnarray}
which is an upper bound of the SINR outage probability
\beq
\Pr\left({\text{SINR}}_{b,g}(\mathbf{y},\mathbf{X})\!<\! z| \mathbf{y},\mathbf{X}\!\right) \d4 &\leq& \d4  \Pr\left(\overline{\text{SINR}}_{b,g}(\mathbf{y},\mathbf{X})\!<\! z| \mathbf{y},\mathbf{X}\!\right). \nn \\
&& \hspace{3cm}
\label{SINRoutage_upperbound}
\eeq }
\vspace{-0.5cm}
\begin{remark}
\tcg{The lower bound in \eqref{SINR_Exp} is obtained based on the lower bound of the desired signal power in \eqref{desirepower} and the upper bound of the interference power in \eqref{single_inf}. 
Note that the bound in \eqref{desirepower} becomes tight when $n_b(\bx_b) \approx N_{\text{RF}}$; it indeed becomes the equality when $n_b(\bx_b) = N_{\text{RF}}$.   
Similarly, the bound in \eqref{single_inf} becomes tight when $n_b(\bx_b) \approx N_{\text{RF}}$. It can be further tighten when the accumulated interference $\sum_{i\in \mathcal{B}} y_i \widehat{I}_{i,g}(x_{i,g})$ in \eqref{SINR_Exp} is dominated by  nearby BSs of the grid $g$ because in this case the link distances between the nearby BSs and the grid $g$ are relatively small, resulting in low physical blockage probabilities $p_{i,g}^{\text{blk}}$ and revealing a higher chance for the event $\cA_{i,g}$ in \eqref{single_inf}.} 
\end{remark}



\subsection{UE Outage Constraint}  \label{UeOutageCons}
To formulate the UE outage constraint that takes in the physical blockage, UE access-limited blockage, and SINR outage, we first identify the UE outage probability associated with a single link from a BS to a UE. This is then extended to the UE outage constraint that captures the effect of surrounding BSs.

\subsubsection{Single-Link UE Outage Probability}

\begin{figure}
  \centering
  \includegraphics[width=0.48\textwidth]{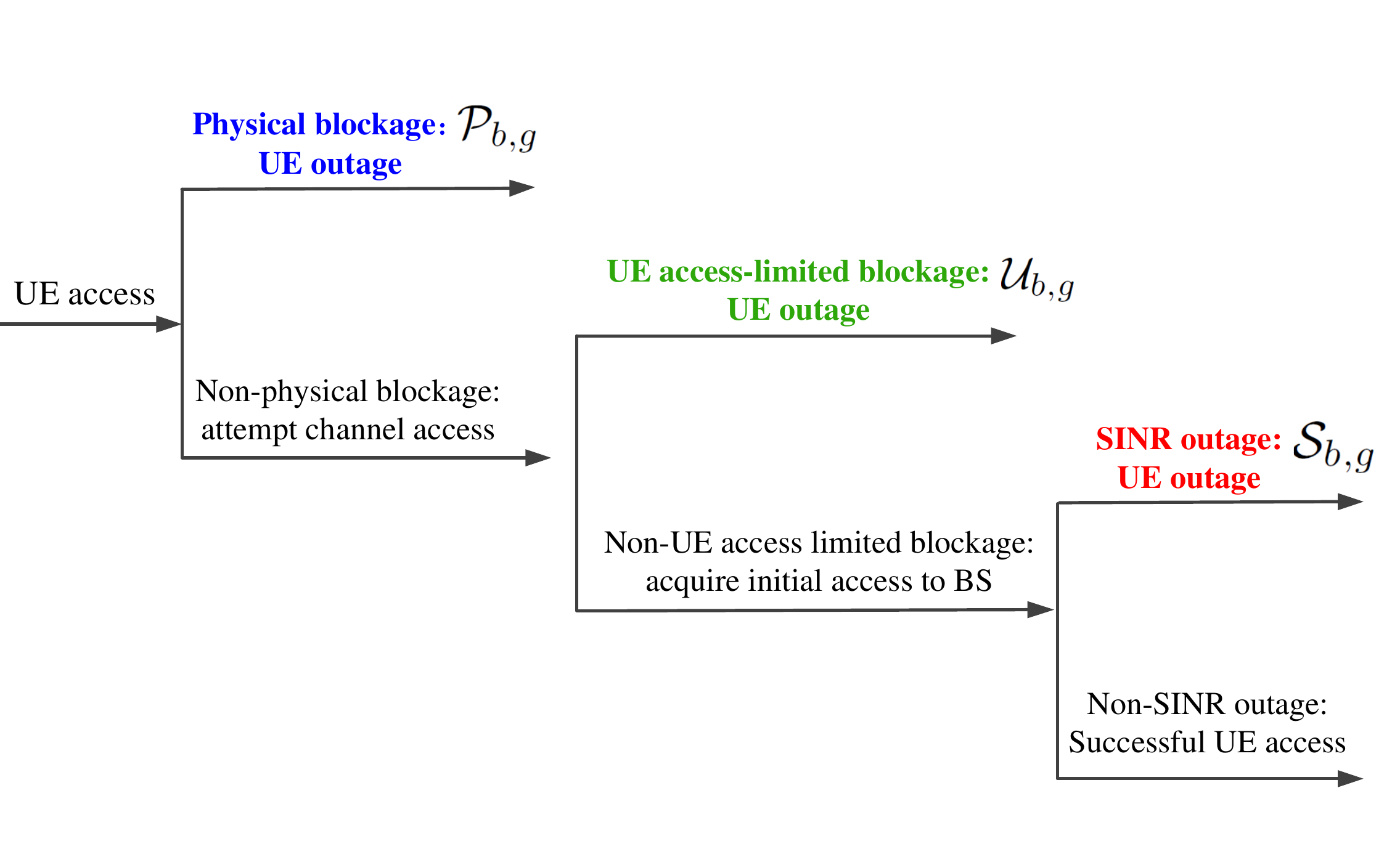}
  \caption{Logical relationship of the three events causing UE outage. }
  \label{fig:ThreeEvents}
\end{figure}

\tcg{A link from BS~$b$ to a UE $g$ is in outage if one of the following mutually exclusive events occurs as described in the previous subsections: (i) Event $\cP_{b,g}$: the channel is physically blocked; (ii) Event $\cU_{b,g}$: the channel is physically unblocked, but the UE access-limited blockage occurs; and (iii) Event $\cS_{b,g}$: the link has no blockage and the UE acquires initial access to BS (i.e., SINR$_{b,g}\geq 0$), but the SINR outage occurs. The logical relationship of the three events  is  illustrated  in  Fig.  \ref{fig:ThreeEvents}. 
Because the three events are mutually exclusive, the single-link outage probability is given by
\beq
p_{b,g}^{\text{Out}} =\Pr(\cP_{b,g})+\Pr(\cU_{b,g})+\Pr(\cS_{b,g}). \label{eq:outage1}
\eeq
Substituting  the physical blockage probability $p_{b,g}^{\text{blk}}$ in \eqref{LoS_Blockage}, UE access-limited blockage $\rho_{b, g}(\bx_b)$ in \eqref{express1}, and SINR outage  upper bound
$\Pr\left(\overline{\text{SINR}}_{b,g}(\mathbf{y},\mathbf{X})\!<\! z| \mathbf{y},\mathbf{X}\!\right)$  in 
\eqref{eqn:pUB} in \eqref{eq:outage1}, an upper bound of the single-link outage probability  $p_{b,g}^{\text{Out}}$ is given by 
\beq
p_{b,g}^{\text{Out}} \!\leq\!
p^{\text{blk}}_{b,g} \!+\! \rho_{b, g}(\bx_b)\!\left(1\!-\!p^{\text{blk}}_{b,g}\right)\!+\!  \left(1\!-\!\rho_{b, g}(\bx_b)\right) \widehat{p}_{b,g}^{\text{SINR}}(\mathbf{y}, \mathbf{X}), 
\label{eq:outage_upper_bound}
\eeq
where $\widehat{p}_{b,g}^{\text{SINR}}(\mathbf{y}, \mathbf{X})\triangleq\left(1\!-\!p^{\text{blk}}_{b,g}\right)\Pr\left(\overline{\text{SINR}}_{b,g}(\mathbf{y},\mathbf{X})\!<\! z| \mathbf{y},\mathbf{X}\!\right)$ for ease of exposition.}

\subsubsection{UE Outage Constraint}
Since a UE $g$ can be covered by multiple BSs, the UE outage in the network occurs when all these links are simultaneously in the outage.
\tcg{Assuming independent outage per link, the UE  outage in the network is 
\begin{eqnarray} \nonumber
\prod\limits_{b=1}^B \!\left[ p_{b,g}^{\text{Out}} \right]^{x_{b,g}} \leq \prod\limits_{b=1}^B \!\Big[p^{\text{blk}}_{b,g} \d4\d4&&+ \rho_{b, g}(\bx_b)\!\left(1\!-\!p^{\text{blk}}_{b,g}\right) \!\\
\hspace{-0.3cm}&&+  \left(1\!-\!\rho_{b, g}(\bx_b)\right) \widehat{p}_{b,g}^{\text{SINR}}(\mathbf{y}, \mathbf{X}) \Big]^{x_{b,g}} \!\! \nn,
\end{eqnarray}
where the inequality follows from \eqref{eq:outage_upper_bound}.
Introducing a UE outage tolerance $\zeta\!\in\!(0,1]$, we find a sufficient condition for the UE outage guarantee to satisfy
\beq \nonumber
\prod\limits_{b=1}^B \!\Big[p^{\text{blk}}_{b,g} \!+\! \rho_{b, g}(\bx_b)\!\left(1\!-\!p^{\text{blk}}_{b,g}\right) \!+ \! \left(1\!-\!\rho_{b, g}(\bx_b)\right) \!\!\!\d4\d4&&\widehat{p}_{b,g}^{\text{SINR}}(\mathbf{y}, \mathbf{X}) \Big]^{x_{b,g}} \\
&& \hspace{0.5cm}\leq \zeta.  \label{eq:outage2}
\eeq
Note that any $(\by,\bX)$ guaranteeing  \eqref{eq:outage2} ensures
$\prod_{b=1}^B \!\big[ p_{b,g}^{\text{Out}} \big]^{x_{b,g}}\leq \zeta$.
For ease of manipulation, it is customary to convert the geometric terms in \eqref{eq:outage2} to a linear form by taking the logarithm on both sides of \eqref{eq:outage2}, which yields}
\begin{eqnarray}  \nonumber
\hspace{-0.3cm} \sum_{b\in \cB} x_{b,g}  \log \Big( \d4\d4\d4&& p^{\text{blk}}_{b,g} + \rho_{b, g}(\bx_b)\!\left(1-p^{\text{blk}}_{b,g}\right) \\
\hspace{-0.3cm} && \hspace{0.5cm} + \left(1\!-\!\rho_{b, g}(\bx_b)\right) \widehat{p}_{b,g}^{\text{SINR}}(\mathbf{y}, \mathbf{X})  \Big) \leq \log\left(\zeta\right). \quad\;\;
\label{SINR_constraint}
\end{eqnarray}

\tcg{The UE outage tolerance level $\zeta$ in \eqref{SINR_constraint} is a user-defined variable and could  be determined practically during the BS deployment planning. Different grid $g$ can be set with different $\zeta$ values in \eqref{SINR_constraint}.
Imposing \eqref{SINR_constraint} as a constraint for the BS deployment, the obtained BS deployment will guarantee successful channel access with a probability larger than $1-\zeta$. In the urban environments with dense UE distribution, the $\zeta$ values per grid can be enforced to be small for reliable channel access, while relatively large $\zeta$ values can be used in rural scenarios.
When the grid has no UEs, we can set $\zeta=1$ parsimoniously, implying that this grid does not need to be covered by any BSs.}

However, the constraint in \eqref{SINR_constraint} is not analyzable and difficult to be used for formulating optimization problems, for which we propose below decomposition of \eqref{SINR_constraint} into two simpler bounds based on the following lemma. 

\begin{lemma}
\label{lemma_UEAccessLimitedBlc}
The left-hand-side of (\ref{SINR_constraint})  
is a monotonically increasing function of the UE access-limited blockage probability  $\rho_{b,g}(\bx_b)$.
\end{lemma}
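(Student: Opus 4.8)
The plan is to prove the claim coordinate-wise: I will show that each summand of the left-hand side of \eqref{SINR_constraint} is nondecreasing in its own $\rho_{b,g}(\bx_b)$, treating $p^{\text{blk}}_{b,g}$ and $\widehat{p}^{\text{SINR}}_{b,g}(\mathbf{y},\mathbf{X})$ as fixed since neither of them depends on $\rho_{b,g}$. Fixing a grid $g$ and an index $b$, and writing the argument of the $b$th logarithm as a function of the free variable $\rho$,
\[
f_{b,g}(\rho) = p^{\text{blk}}_{b,g} + \rho\left(1-p^{\text{blk}}_{b,g}\right) + (1-\rho)\,\widehat{p}^{\text{SINR}}_{b,g}(\mathbf{y},\mathbf{X}),
\]
the left-hand side of \eqref{SINR_constraint} becomes $\sum_{b\in\cB} x_{b,g}\log f_{b,g}\!\left(\rho_{b,g}(\bx_b)\right)$. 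It therefore suffices to verify that $f_{b,g}$ is nondecreasing in $\rho$.

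First I would differentiate $f_{b,g}$ with respect to $\rho$, obtaining the constant slope $\left(1-p^{\text{blk}}_{b,g}\right) - \widehat{p}^{\text{SINR}}_{b,g}(\mathbf{y},\mathbf{X})$. The key step is to substitute the definition $\widehat{p}^{\text{SINR}}_{b,g}(\mathbf{y},\mathbf{X}) = (1-p^{\text{blk}}_{b,g})\Pr\!\left(\overline{\text{SINR}}_{b,g}(\mathbf{y},\mathbf{X})<z\mid\mathbf{y},\mathbf{X}\right)$ introduced just after \eqref{eq:outage_upper_bound}, which factors the derivative as
\[
\frac{\partial f_{b,g}}{\partial \rho} = \left(1-p^{\text{blk}}_{b,g}\right)\left(1 - \Pr\!\left(\overline{\text{SINR}}_{b,g}(\mathbf{y},\mathbf{X})<z\mid\mathbf{y},\mathbf{X}\right)\right).
\]
Both factors are manifestly nonnegative: $1-p^{\text{blk}}_{b,g}\geq 0$ because $p^{\text{blk}}_{b,g}\in[0,1]$ by \eqref{LoS_Blockage}, and the second factor is one minus a probability, hence also in $[0,1]$. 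Thus $\partial f_{b,g}/\partial\rho\geq 0$ for all admissible $\rho$.

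To finish, I would invoke monotonicity of the composition: since $\log(\cdot)$ is strictly increasing, $\log f_{b,g}(\rho)$ is nondecreasing in $\rho$; multiplying by the nonnegative binary weight $x_{b,g}\in\{0,1\}$ preserves this, and a sum of nondecreasing functions over $b\in\cB$ is again nondecreasing. Hence the left-hand side of \eqref{SINR_constraint} is monotonically increasing in each $\rho_{b,g}(\bx_b)$, as claimed. I do not expect any genuine obstacle here; the only point requiring care is the sign of the slope after the substitution, and this is settled immediately once one observes that the SINR-outage factor is a probability and so cannot exceed one. The remainder is a routine derivative-and-chain-rule argument, mirroring the monotonicity computation already carried out in the proof of Lemma~\ref{lemma2}.
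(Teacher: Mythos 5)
Your proof is correct and follows essentially the same route as the paper's: reduce to the monotonicity of each argument $\Psi_{b,g}$ of the logarithm, compute the constant slope $\bigl(1-p^{\text{blk}}_{b,g}\bigr) - \widehat{p}^{\text{SINR}}_{b,g}(\mathbf{y},\mathbf{X})$, and conclude it is nonnegative from the definition $\widehat{p}^{\text{SINR}}_{b,g}(\mathbf{y},\mathbf{X}) = \bigl(1-p^{\text{blk}}_{b,g}\bigr)\Pr\bigl(\overline{\text{SINR}}_{b,g}(\mathbf{y},\mathbf{X})<z\mid\mathbf{y},\mathbf{X}\bigr) \leq 1-p^{\text{blk}}_{b,g}$. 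The only cosmetic difference is that you factor the derivative while the paper states the bound \eqref{eq:bound1} directly; the substance is identical.
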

\begin{proof}
It is not difficult to observe that the left-hand-side of (\ref{SINR_constraint}), which is represented by $\sum_{b\in \cB} x_{b,g}  \log(\Psi_{b,g})$, where
$$\Psi_{b,g}\!\triangleq\!p^{\text{blk}}_{b,g} + \rho_{b, g}(\bx_b)\!\left(1-p^{\text{blk}}_{b,g}\right)  + \left(1\!-\!\rho_{b, g}(\bx_b)\right) \widehat{p}_{b,g}^{\text{SINR}}(\mathbf{y}, \mathbf{X}),$$
is an increasing function of $\Psi_{b,g}$. 
Hence, we only need to validate that the $\Psi_{b,g}$ is a monotonically increasing function of  
$\rho_{b, g}(\bx_b)$. 
It is not difficult to observe that the first-order derivative of $\Psi_{b,g}$ with respect to $\rho_{b, g}(\bx_b)$ is greater than or equal to zero, i.e., 
\beq \nonumber
\frac{\partial \Psi_{b,g}}{\partial \rho_{b, g}(\bx_b)}=\left(1-p^{\text{blk}}_{b,g}\right)-\widehat{p}_{b,g}^{\text{SINR}}(\mathbf{y}, \mathbf{X})\stackrel{(b)}\geq 0, \nn
\eeq
where (b) follows from the definition of $\widehat{p}_{b,g}^{\text{SINR}}(\mathbf{y}, \mathbf{X})$ in  \eqref{eq:outage_upper_bound} and the upper bound
\beq 
\widehat{p}_{b,g}^{\text{SINR}}(\mathbf{y}, \mathbf{X})\d4 &=& \d4 \left(1\!-\!p^{\text{blk}}_{b,g}\right)\Pr\left(\overline{\text{SINR}}_{b,g}(\mathbf{y},\mathbf{X}) z| \mathbf{y},\mathbf{X}\!\right)
\nn \\ 
\d4 & \leq & \d4 
 1\!-\!p^{\text{blk}}_{b,g}, \label{eq:bound1}
 \eeq
 which completes the proof.
\end{proof}
\tcg{We introduce a tolerance level $\gamma$ to limit the value of $\rho_{b,g}(\bx_b)$ in (\ref{SINR_constraint}) as
\beq
\rho_{b, g}(\bx_b) \!=\!\sum_{i=N_{\text{RF}}+1}^{+\infty}  \! \! \frac{E[{n}_b(\bx_b)]^{i}}{i!}e^{-E[{n}_b(\bx_b)]}\frac{i-N_{\text{RF}}}{i} \leq \gamma.
\label{express2}
\eeq
From Lemma~\ref{lemma_UEAccessLimitedBlc} and \eqref{express2}, we further upper bound the left-hand-side of \eqref{SINR_constraint} yielding
\begin{eqnarray}  \nonumber
\hspace{-0.3cm} \sum_{b\in \cB} x_{b,g}  \log \Big( \d4\d4\d4&& p^{\text{blk}}_{b,g} + \gamma\!\left(1-p^{\text{blk}}_{b,g}\right) \\
\hspace{-0.3cm} && \hspace{0.5cm} + \left(1\!-\!\gamma\right) \hat{p}_{b,g}^{\text{SINR}}(\mathbf{y}, \mathbf{X})  \Big) \leq \log\left(\zeta\right). \quad\;\;
\label{SINR_constraint2}
\end{eqnarray}
Since the left-hand-side of (\ref{SINR_constraint2}) is an upper bound of that in  \eqref{SINR_constraint}, any $(\by,\bX)$ guaranteeing  \eqref{SINR_constraint2} ensures \eqref{SINR_constraint}. 
However, the condition in \eqref{express2} is rather difficult to be directly analyzed and used as a constraint of the BS optimization problem.
This motivates us to find an equivalent, but tractable condition of \eqref{express2} to facilitate the BS deployment problem solving. 
From Lemma~\ref{lemma2}, we already know that  $\rho_{b, g}(\bx_b)$ is a monotonically increasing function of $E[{n}_b(\bx_b)]$. Hence, for $\Phi\!\geq\! 0$ satisfying $ \sum_{i=N_{\text{RF}}+1}^{+\infty}  \frac{\Phi^{i}}{i!}e^{-\Phi}\frac{i-N_{\text{RF}}}{i} = \gamma$, the bound in \eqref{express2} is equivalent to the following linear constraint
\begin{eqnarray}
  E[{n}_{b}(\bx_b)] \!=\! \sum_{g\in\mathcal{G}} x_{b,g}\lambda_{\text{UE},g}L^2_{\text{grd}} (1-p_{b,g}^{\text{blk}}) \leq \Phi.
\label{Re_NumUE_PerBS}
\end{eqnarray}
Instead of \eqref{SINR_constraint}, we take in the two inequalities in \eqref{Re_NumUE_PerBS} and \eqref{SINR_constraint2} as the UE outage constraint to formulate the BS deployment optimization problem below.}

\subsection{\tcg{UE Outage-Guaranteed Minimum-Cost BS Deployment Problem}}  \label{Problem_form}

Incorporating the BS association constraints in \eqref{xy_relation}-\eqref{LoS_visible_cons2} and  UE outage constraints  (\ref{SINR_constraint2}), \eqref{Re_NumUE_PerBS} into the minimum-cost BS deployment criterion  gives
\begin{subequations}
\label{Opt_problem}
\begin{eqnarray}
\label{objective}
  \min\limits_{\mathbf{y}, \mathbf{X}}  \d4\d4 &&  \sum_{b=1}^B c_b y_b \\
 \text{subject to} \d4\d4 &&  x_{b,g}\leq  y_{b}, \quad\quad\quad \quad\quad\quad  
                   \label{UE_ass_con} \\
                   &&  x_{b,g} \leq \mathbb{I}_{\left\{{{ r_{b,g}\leq R^{\text{max}}, \;p_{b,g}^{\text{blk}}}<1,  } \right \}}    
                   \label{LoS_Cov} \\
                   &&  \mathbb{I}_{\left\{r_{b,s}\leq r_{b,g}, \;p_{b,s}^{\text{blk}}<1  \right\}}x_{b,g} \leq x_{b,s}, 
                   \label{LoS_Cov_Corr} \\
                   && E[{n}_{b}(\bx_b)] \leq \Phi, \;\; \quad  
                   \label{Cov_Limite} \\ \nonumber
 && \sum_{b\in \cB} x_{b,g} \log\Big(  p^{\text{blk}}_{b,g} \!+\! \gamma\left(1-p^{\text{blk}}_{b,g}\right)  \\
                                 && \hspace{0.3cm} +\left(1\!-\!\gamma\right) \widehat{p}_{b,g}^{\text{ SINR}}(\mathbf{y}, \mathbf{X}) \! \Big) \!\!\leq\! \log\left(\zeta\right),  
                   \label{UE_outage} \\  \nonumber
  &&  y_b\!\in\!\{0,1\},  \;x_{b,g}\!\in\!\{0,1\},  \;\; \forall b\in\mathcal{B},\; g\in \mathcal{G},
\end{eqnarray}
\end{subequations}
where $c_b$ in \eqref{objective} is the BS installation cost at location $b\in \cB$.
The objective in \eqref{objective} is to minimize the cost for deploying BSs by jointly optimizing the \emph{BS deployment vector} $\by\in \mathbb{B}^{1\!\times \!B}$  and \emph{association matrix}  $\bX\in \mathbb{B}^{B\!\times \!G}$.
\tcg{Because of the nonlinear constraint  (\ref{UE_outage}) with respect to the binary vector $\by$ and binary matrix $\bX$, the problem  in (\ref{Opt_problem}) is INP, which is excessively complex to be directly solvable \cite{INP1}.} More specifically, directly searching for the optimal solution needs to evaluate all $2^{B\times (G+1)}$ combinations of $(\mathbf{y}, \mathbf{X})$, which is prohibitive for relatively large $B$ and $G$ (e.g., $B\geq 100$ and $G\geq 100$).
In the next subsection, we address this challenge and propose
a low-complexity approach to  the problem \eqref{Opt_problem}, while optimally solving it.
\section{Minimum-Cost BS Deployment Algorithm}   \label{sec:two_methods}
In this section, we  find the optimal solution to the minimum-cost BS deployment problem in \eqref{Opt_problem}. 
The key to optimally solving \eqref{Opt_problem} lies in decomposing it into two \emph{separable} subproblems: (i) BS coverage optimization problem, which  finds a feasible association matrix $\bX$ to the constraints in \eqref{UE_ass_con}-\eqref{UE_outage}  as a function of the BS deployment vector $\by$, and (ii) minimum-cost subset BS selection problem, which finds the minimum-cost $\by$ to guarantee the UE outage constraints.
The main motivation of this approach is that the objective $\sum_{b=1}^B c_b y_b$ in \eqref{Opt_problem} is independent of $\bX$, and thus, the optimal solution can be attained by firstly expressing the feasible $\bX$ (i.e., satisfying  \eqref{UE_ass_con}-\eqref{UE_outage} ) as a function of $\by$ and secondly optimizing $\by$ to minimize the objective function in \eqref{Opt_problem}. 
The BS coverage optimization subproblem is first discussed.

\subsection{BS Coverage Optimization}  \label{subproblem1}
As a starting point, we introduce the following proposition showing a monotonic relationship between the macro diversity order in \eqref{macro_diversity} and the left-hand-side of the UE outage constraint \eqref{UE_outage}.

\begin{lemma}
\label{lemma_MacroDiv_OUTPro}
{For a fixed $\mathbf{y}$, the  left-hand-side of \eqref{UE_outage} is a monotonically decreasing function of the macro diversity order $\sum_{b\in\mathcal{B}} x_{b,g}$} in \eqref{macro_diversity}. 
\end{lemma}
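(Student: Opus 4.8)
The plan is to rewrite the left-hand side of \eqref{UE_outage} in the compact form $\sum_{b\in\cB} x_{b,g}\log(\Theta_{b,g})$ with the shorthand $\Theta_{b,g}\triangleq p^{\text{blk}}_{b,g}+\gamma(1-p^{\text{blk}}_{b,g})+(1-\gamma)\widehat{p}_{b,g}^{\text{SINR}}(\by,\bX)$, and then to show that every summand is non-positive and that raising the macro diversity order $\sum_{b\in\cB}x_{b,g}$ can only subtract from this sum. First I would establish the pointwise bound $\Theta_{b,g}\le 1$. Using the inequality $\widehat{p}_{b,g}^{\text{SINR}}(\by,\bX)\le 1-p^{\text{blk}}_{b,g}$ already proved in \eqref{eq:bound1} together with $1-\gamma\ge 0$ (since $\gamma\in(0,1]$), direct substitution gives $\Theta_{b,g}\le p^{\text{blk}}_{b,g}+\gamma(1-p^{\text{blk}}_{b,g})+(1-\gamma)(1-p^{\text{blk}}_{b,g})=1$, hence $\log(\Theta_{b,g})\le 0$ for every $b$ and $g$.

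The second step is to compare two association matrices that differ only by activating one extra covering BS for grid $g$. With $\by$ fixed, consider turning $x_{j,g}$ from $0$ to $1$ for some $j$ with $y_j=1$ (permissible by \eqref{UE_ass_con}), which raises the macro diversity order by one. This perturbation affects the sum in two ways: it appends the new term $\log(\Theta_{j,g})$, which is $\le 0$ by the first step, and it modifies the interference-dependent terms $\widehat{p}_{b,g}^{\text{SINR}}$ of the already-covered links. I would argue the latter terms do not increase by noting from \eqref{single_inf} that $\widehat{I}_{j,g}(x_{j,g})=(1-x_{j,g}/N_{\text{RF}})P_{\text{TX}}G_{\text{side}}\text{PL}_{j,g}$ is decreasing in $x_{j,g}$; because $\by$ is fixed, the interferer set in the denominator of \eqref{SINR_Exp} is unchanged, so lowering $\widehat{I}_{j,g}$ can only raise each $\overline{\text{SINR}}_{b,g}$. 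Through the indicator in \eqref{eqn:pUB}, a larger $\overline{\text{SINR}}_{b,g}$ can only switch the SINR-outage event off, so each $\widehat{p}_{b,g}^{\text{SINR}}$ and hence each $\Theta_{b,g}$ weakly decreases and each $\log(\Theta_{b,g})$ becomes weakly more negative.

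Combining the two observations, the net change in the left-hand side of \eqref{UE_outage} equals $\log(\Theta_{j,g})$ plus a sum of non-positive increments over the previously covered links, which is $\le 0$. A short induction on the number of newly activated BSs then shows that any increase of $\sum_{b\in\cB}x_{b,g}$ weakly decreases the left-hand side, which is exactly the claimed monotonicity.

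I expect the main obstacle to be the coupling through the interference term: increasing the diversity order simultaneously appends a new summand \emph{and} perturbs the SINR-outage indicators of the existing summands, so the argument must verify that this coupling pushes in the same (decreasing) direction rather than against it. The feature that makes it go through is that fixing $\by$ freezes the set of interferers, so activating $x_{j,g}$ can only reduce interference and improve every $\overline{\text{SINR}}_{b,g}$; the monotone (non-increasing) dependence of the indicator in \eqref{eqn:pUB} on $\overline{\text{SINR}}_{b,g}$ then guarantees the already-covered terms move consistently with the newly added non-positive term.
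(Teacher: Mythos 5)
Your proof is correct and takes essentially the same route as the paper's: both first establish that each logarithmic term is non-positive via the bound $\widehat{p}_{b,g}^{\text{SINR}}(\by,\bX)\leq 1-p^{\text{blk}}_{b,g}$ from \eqref{eq:bound1}, and both then argue that, for fixed $\by$, raising a coverage indicator lowers the interference bound $\widehat{I}_{i,g}(x_{i,g})$ in \eqref{single_inf}, hence raises $\overline{\text{SINR}}_{b,g}$ in \eqref{SINR_Exp} and weakly decreases every SINR-outage indicator in \eqref{eqn:pUB}. Your one-activation-at-a-time induction simply makes explicit the coupling (new non-positive summand plus weakly more negative existing summands) that the paper handles through its two-case analysis of $x_{b,g}=1$ versus $x_{b,g}=0$.
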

\begin{proof}
We first claim that the left-hanf-side of \eqref{UE_outage} is non-positive, i.e., $$\textstyle \log\Big(p^{\text{blk}}_{b,g} +                \gamma\left(1-p^{\text{blk}}_{b,g}\right)   +\left(1\!-\!\gamma\right) \widehat{p}_{b,g}^{\text{SINR}}(\mathbf{y}, \mathbf{X}) \Big)\leq 0.$$
This can be checked by the bound in \eqref{eq:bound1}. 
This reveals that if  $\widehat{p}_{b,g}^{\text{SINR}}(\mathbf{y}, \mathbf{X})$  is a monotonically decreasing  function of  the diversity order $\sum_{b\in\mathcal{B}} x_{b,g}$, so is 
the left-hand-side of \eqref{UE_outage}.
Thus, in what follows, it suffices to show the monotonicity  of  $\widehat{p}_{b,g}^{\text{SINR}}(\mathbf{y}, \mathbf{X})$. 
To this end, we divide the proof into two cases when $x_{b,g}=1$ and $x_{b,g}=0$.

First, when $x_{b,g}=1$, it can be shown from \eqref{single_inf} and \eqref{SINR_Exp} that as the macro diversity order $\sum_{b\in\mathcal{B}} x_{b,g}$ of a UE in grid $g$ increases for a fixed $\by$, the composite interference power $\sum_{i\in \mathcal{B}} y_i\widehat{I}_{i,g}(x_{i,g})$ decreases, concluding that $\widehat{p}_{b,g}^{\text{ SINR}}(\mathbf{y}, \mathbf{X})$ is a monotonically decreasing  function of   $\sum_{b\in\mathcal{B}} x_{b,g}$. 
On the other hand, when $x_{b,g}=0$, the increment of macro diversity order $\sum_{b\in\mathcal{B}} x_{b,g}$ can lead to either $x_{b,g}=0$ (unchanged) or $x_{b,g}=1$. In the former case, we have
$\widehat{p}_{b,g}^{\text{SINR}}(\mathbf{y}, \mathbf{X})=1-p_{b,g}^{\text{blk}}$ due to \eqref{desirepower}, while in the latter case, $\widehat{p}_{b,g}^{\text{SINR}}(\mathbf{y}, \mathbf{X})$ decreases, i.e.,  $\widehat{p}_{b,g}^{\text{SINR}}(\mathbf{y}, \mathbf{X})\leq 1-p_{b,g}^{\text{blk}}$. As a result, we conclude that $\widehat{p}_{b,g}^{\text{SINR}}(\mathbf{y}, \mathbf{X})$ is a monotonically decreasing function of $\sum_{b\in\mathcal{B}} x_{b,g}$. This  completes the proof.
\end{proof}

This lemma reveals that for a fixed $\by$ the left-hand-side of \eqref{UE_outage} is minimized by maximizing the macro diversity order  $\sum_b x_{b,g}$ of each grid.
Note that for a fixed $\by$ the objective function $\sum_{b=1}^B c_b y_b$ in \eqref{Opt_problem} is
independent of $\bX$ and any $\bX$ that satisfies the constraints in \eqref{UE_ass_con}-\eqref{UE_outage} (i.e., feasible $\bX$) is optimal.
\tcg{By leveraging Lemma \ref{lemma_MacroDiv_OUTPro}, a feasible association matrix $\bX$ of the problem in \eqref{Opt_problem} for a given $\by$  can be obtained by maximizing the macro diversity order  $\sum_b x_{b,g}$  subject to the BS coverage and UE access-limited blockage constraints in \eqref{UE_ass_con}-\eqref{Cov_Limite}. 
To find this feasible association matrix $\bX$ and express it as a function of a fixed $\by$, 
we introduce an auxiliary variable   $\Lambda_{b,g} \in \B$, called a coverage indicator, associated with the candidate location $b\in\cB$ and the grid $g\in\cG$, such that 
\beq
x_{b,g}= y_b \Lambda_{b,g}, \label{simplex1}
\eeq 
where $\Lambda_{b,g} = 1$ if a candidate BS location $b\in \cB$ (regardless whether $y_b = 1$ or $y_b = 0$) covers the grid $g$, and $\Lambda_{b,g} = 0$ otherwise.
The objective of maximizing the macro diversity order of all grids for a given $\by$
can be expressed and equivalently transformed to the objective of maximizing the cell coverage of each candidate location $b$ as follows, 
\beq
\max_{\bX} \!\sum_{g=1}^G \sum_{b=1}^B x_{b,g} \!=\!\! \max_{\bX}\! \sum_{b=1}^B \sum_{g=1}^G   x_{b,g} \!=\!\! \sum\limits_{b=1}^{B} y_b \max\limits_{\boldsymbol{\Lambda}_b } \!\sum\limits_{g=1}^{G} \Lambda_{b,g}, \label{max_coverage}
\eeq
where the first equality follows from the fact that changing
the order of summations does not alter the optimality and the second equality is due to $x_{b,g} = y_b \Lambda_{b,g}$ and the fact that $\by$ is fixed, where $\sum_{g=1}^{G} \Lambda_{b,g}$ is the cell coverage of the candidate location $b$ and $\boldsymbol{\Lambda}_b = [\Lambda_{b,1}, \ldots, \Lambda_{b,G}]\in \B^{1\times G}$.}
Motivated by \eqref{max_coverage}, we find the \tcg{maximum cell coverage of each candidate location $b$ subject to the BS coverage \eqref{UE_ass_con}-\eqref{LoS_Cov_Corr} and UE  access-limited blockage  \eqref{Cov_Limite}} constraints, leading to
\begin{subequations}
\label{sumproblem1}
\begin{eqnarray}
\max\limits_{\boldsymbol{\Lambda}_b} \d4\d4 && \sum\limits_{g=1}^{G} \Lambda_{b,g}, ~~ \forall b\in \cB \label{Cov_objective} \\
 \text{subject to} \d4\d4 &&  \Lambda_{b,g} \leq \I_{\{ r_{b,g}\leq R^{\text{max}}, p_{b,g}^{\text{blk}} <1 \} }
 \label{LoS_Cov1} \\
 \d4\d4 &&  \mathbb{I}_{\left\{r_{b,s}\leq r_{b,g}, \;p_{b,s}^{\text{blk}}<1  \right\}}\Lambda_{b,g} \leq \Lambda_{b,s},
 \label{LoS_Cov_Corr1} \\
 \d4\d4 && E[{n}_{b}(\boldsymbol{\Lambda}_b)] \leq \Phi, \;\; \quad
 \label{Cap_Limit1} \\
 \nonumber
 \d4\d4 && \Lambda_{b,g}\!\in\!\{0,1\},  \;\; \forall \; g\in \mathcal{G},
\end{eqnarray}
\end{subequations}
where
\beq
E[{n}_{b}(\boldsymbol{\Lambda}_b)] \= \sum_{g\in\mathcal{G}} \Lambda_{b,g}\lambda_{\text{UE},g}L^2_{\text{grd}} (1-p_{b,g}^{\text{blk}}),
\label{LambaE}
\eeq
and the association constraint in \eqref{UE_ass_con} is omitted because the construction in \eqref{simplex1} already implies \eqref{UE_ass_con}. 

We note that all constraints in \eqref{sumproblem1} are consistent with those in \eqref{Opt_problem} except for that it excludes the UE outage constraint in \eqref{UE_outage} and $x_{b,g}$ is changed to $\Lambda_{b,g}$.
Without loss of optimality, we relegate  $E[{n}_{b}(\bx_b)]\leq \bPhi$ in \eqref{Cov_Limite} to $E[{n}_{b}(\boldsymbol{\Lambda}_b)] \leq \bPhi$ because $E[{n}_{b}(\boldsymbol{\Lambda}_b)] \leq \bPhi$ implies $E[{n}_{b}(\bx_b)]\leq \bPhi$.

\tcg{Since the coverage maximization at
each candidate location is separable and the grids closer to a candidate location have the priority to be associated with the BS deployed at the location due to \eqref{LoS_Cov_Corr1}, finding  $\max_{\boldsymbol{\Lambda}_b } \sum_{g=1}^{G} \Lambda_{b,g}$, $\forall b\in \cB$ in \eqref{sumproblem1}
 is equivalent to finding the maximum-link distance  $r_b^{\text{max}}$, which can be described as an optimization problem given by
 \begin{eqnarray}
\label{BS_coverage}
r_b^{\text{max}}   \d4 && \d4 = \max\limits_{\boldsymbol{\Lambda}_b} ~\Lambda_{b,g}r_{b,g}\;~ \\ \nonumber
\text{subject to:}\d4\d4 && (\ref{LoS_Cov1}),  \;(\ref{LoS_Cov_Corr1}),\; \eqref{Cap_Limit1},\; r_{b,g}\leq R^{\text{max}}.
\end{eqnarray}
Accounting for the constraints in (\ref{LoS_Cov1}) and   (\ref{LoS_Cov_Corr1}), it is clear that as the BS $b$ covers more grids (i.e., $\sum_{g=1}^{G} \Lambda_{b,g}$ increases) the objective in \eqref{BS_coverage} increases. 
From \eqref{LambaE} it is also clear that $E[{n}_{b}(\boldsymbol{\Lambda}_b)]$ in  \eqref{Cap_Limit1} is a monotonically increasing function of  $\sum_{g=1}^{G} \Lambda_{b,g}$.  
Hence the $r_b^{\text{max}}$ in \eqref{BS_coverage} is attained either when (i) $E[{n}_{b}(\boldsymbol{\Lambda}_b)] = \Phi$ and $r_b^{\text{max}} < R^{\text{max}}$ or (ii) $E[{n}_{b}(\boldsymbol{\Lambda}_b)] \!\leq\! \Phi$ and $r_b^{\text{max}}\!=\!R^{\text{max}}$.}

Due to monotonicity of the constraint \eqref{Cap_Limit1} with respect to $r_b^{\text{max}}$ (i.e., $ \sum_{g=1}^{G} \Lambda_{b,g}$), searching $r_b^{\text{max}}$ in \eqref{BS_coverage} can be efficiently done by iterative feasibility testing.
Based on these facts, a bisection method for solving \eqref{BS_coverage} is presented as Algorithm~\ref{alg:R_search}.
At each iteration with a given $r_b^{\text{max}}$, Algorithm~\ref{alg:R_search} exploits the characteristics that the constraints  \eqref{LoS_Cov1} and  \eqref{LoS_Cov_Corr1} determine which grids are associated with the candidate location $b$, while the  UE  access-limited constraint \eqref{Cap_Limit1} and $R^{\text{max}}$  examine the feasibility of the $r_b^{\text{max}}$.
The Step \ref{alg:step5} of Algorithm~\ref{alg:R_search} identifies the indicator $\Lam_{b,g}$ such that $\Lam_{b,g}=1$  if $r_{b,g} \leq r_b^{\text{max}}$ and $p_{b,g}^{\text{blk}} < 1$, and  $\Lam_{b,g}=0$ otherwise, $\forall b \in \cB$, $\forall g\in \cG$. It also exploits the constraint \eqref{LoS_Cov_Corr1}; provided  $\Lam_{b,g} =1$, for other grid  $s\in \cG$ we have $\Lam_{b,s} =1$ if 
$r_{b,s} \leq r_{b,g}$ and $p_{b,s}^{\text{blk}} < 1$, and $\Lam_{b,s}=0$ otherwise.
Algorithm~\ref{alg:R_search} requires exactly $\lceil \log_2(R^{\text{max}}/\epsilon) \rceil$ iterations.

\begin{algorithm}
\caption{Solving max coverage problem in \eqref{BS_coverage}, $\forall b\!\in\!\cB$}
\label{alg:R_search}
\begin{algorithmic}[1]
\State \textbf{Initialize} Lower bound $LB\!=\!0$, upper bound $UB\!=\!R^{\text{max}}$, middle point $MD\!=\!0$, tolerance $\epsilon\!>\!0$,
\For  {$b=1:B$}
\While {$UB-LB > \epsilon$}
\State $MD=\frac{LB+UB}{2}; ~r_b^{\text{max}}\!=\!MD$
\State \label{alg:step5} \tcg{Determine  $\boldsymbol{\Lambda}_b$ using (\ref{LoS_Cov1}), (\ref{LoS_Cov_Corr1})}
    \If{(\ref{Cap_Limit1}) hold }
        \State  {Update $LB=MD$}
    \Else
        \State  {Update $UB=MD$}
    \EndIf
\EndWhile
\EndFor
\State \Return  $r_b^{\text{max}}, \forall b\in\mathcal{B}$.
\end{algorithmic}
\end{algorithm}

Once $r_b^{\text{max}}$, $\forall b \in \cB$, in \eqref{BS_coverage} are determined, we obtain the optimal coverage indicators $\Lambda_{b,g}^\star, \forall b\in\cB, \forall g\in\cG$, based on Step \ref{alg:step5} of Algorithm~\ref{alg:R_search}.
Then, the association  matrix $\mathbf{X}$ is constructed as a function of the given $\by$ according to \begin{eqnarray}
x_{b,g}^\star =y_b\Lambda_{b,g}^{\star}, \;\; \forall b\in\mathcal{B},\; \forall g\in\mathcal{G}.
\label{simpleX2}
\end{eqnarray}
We denote the optimal $\bX$ obtained in \eqref{simpleX2} as $\bX^\star_{\by}$.
Note that for a fixed $\by$, the optimized $\bX^\star_{\by}$ in \eqref{simpleX2} satisfies the constraints (\ref{UE_ass_con})-(\ref{Cov_Limite}) and minimizes the left-hand-side of \eqref{UE_outage}. 

\subsection{Minimum-Cost  Subset  BS  Selection}  \label{subproblem2}

The remaining task is to find the $\by$ that minimizes the object in   (\ref{Opt_problem}) to guarantee the  UE outage constraint in \eqref{UE_outage}, which leads to the second subproblem:

\begin{subequations}
\label{sumproblem2}
\begin{eqnarray}
\d4\min\limits_{\mathbf{y}}   \d4\d4 && \sum_{b=1}^{B} c_by_b  \\ \nonumber
\d4 \text{subject to} \d4\d4 && \sum_{b\in \cB} x_{b,g} \log\Big(  p^{\text{blk}}_{b,g} \!+\! \gamma\left(1-p^{\text{blk}}_{b,g}\right)  \\ \label{UE_outage1}
                                 && \hspace{0.3cm} +\left(1\!-\!\gamma\right) \widehat{p}_{b,g}^{\text{ SINR}}(\mathbf{y}, \bX^\star_{\by}) \! \Big) \!\!\leq\! \log\left(\zeta\right),  
                    \\  \nonumber
  &&  y_b\!\in\!\{0,1\}, \; \forall b\in\mathcal{B}.
\end{eqnarray}
\end{subequations}
Using \eqref{SINR_Exp}, the $\widehat{p}_{b,g}^{\text{ SINR}}(\mathbf{y}, \bX^\star_{\by})$ in \eqref{UE_outage1} can be rewritten as
\begin{eqnarray}   \nonumber
\widehat{p}_{b,g}^{\text{ SINR}}(\mathbf{y}, \bX^\star_{\by}) \d4\d4\!\! &&\stackrel{(a)}=   (1\!-\!p_{b,g}^{\text{blk}}) \I_{\left\{ \frac{\overline{P}_{b,g}(y_b\Lambda_{b,g}^{\star}) }{  \sigma^2 + \sum\limits_{i\in \mathcal{B}} y_i \widehat{I}_{i,g}(y_{i}\Lambda_{i,g}^{\star}) } < z \right\}} \quad \\
&&\stackrel{(b)}= \textstyle (1\!-\!p_{b,g}^{\text{blk}})  \I_{\left\{{\frac{ y_b \overline{P}_{b,g}(\Lambda_{b,g}^{\star}) }{  \sigma^2 + \sum\limits_{i\in \mathcal{B}} y_i \widehat{I}_{i,g}(\Lambda_{i,g}^{\star})   } < z} \right\}}
\label{Upperbound3}
\end{eqnarray}
where (a) is because of \eqref{simpleX2}, and (b) is due to the fact that  both  $\overline{P}_{b,g}(y_b\Lambda_{b,g}^\star)$ and $y_b \widehat{I}_{b,g}(\Lambda_{b,g}^\star)$  are equal to $0$ if $y_b=0$.
By doing so, we manipulate the  UE outage constraint in \eqref{UE_outage1} so that it only depends on $\by$.
The problem in \eqref{sumproblem2} is therefore reformulated as
\begin{subequations}
\label{Opt_subproblem2}
\begin{eqnarray}
   \min\limits_{\mathbf{y}} \d4\d4 &&  \sum_{b=1}^B c_by_b  \\ \nonumber
 \text{subject to}\;\d4\d4&&  \sum_{b=1}^B y_b\Lambda_{b,g}^\star  \log\bigg(  p^{\text{blk}}_{b,g} \!+\! \gamma\left(1-p^{\text{blk}}_{b,g}\right) \\
 \label{SINR_constraint_updated2}
                                         && + \left(1\!-\!\gamma\right) \widehat{p}_{b,g}^{\text{ SINR}}(\mathbf{y})  \bigg)\leq \log\left(\zeta\right),  \\ \nonumber
                                 \label{Domain}
                                         &&  y_b\in\{0,1\},  \forall b\in\mathcal{B},
\end{eqnarray}
\end{subequations}
which is INP because of the nonlinear constraint in  (\ref{SINR_constraint_updated2}). 
\tcg{As aforementioned, finding an optimal solution of large-scale INP  (e.g., large $B$)  is often impractical \cite{sub6_siteBS1, sub6_siteBS2,sub6_siteBS3,sub6_siteBS4,sub6_siteBS5,BSSleep_Femto, lagrangian_dual}.}
Rather than proposing another suboptimal treatment of INP, we propose a sequence of linearlization procedures in the following lemma to equivalently transform the constraint in  (\ref{SINR_constraint_updated2}) to a set of linear constraints so that the problem in \eqref{Opt_subproblem2} can be  converted to integer linear programming (ILP).


\begin{lemma}
\label{Const_linearization}
Suppose auxiliary variables  $s_{b,g}\!\in\!\B$, $\forall b\!\in\!\mathcal{B},~\forall g\!\in\!\mathcal{G}$, which is determined by the indicator function in \eqref{Upperbound3}:
\begin{eqnarray}
\I_{\left\{{\frac{ y_b \overline{P}_{b,g}(\Lambda_{b,g}^\star) }{  \sigma^2 + \sum\limits_{i\in \mathcal{B}} y_i \widehat{I}_{i,g}(\Lambda_{i,g}^\star)   } < z} \right\}}=1-s_{b,g}.
\label{s_notation}
\end{eqnarray} Then, the nonlinear  constraint in  (\ref{SINR_constraint_updated2}) is equivalent to the following set of linear constraints,
\begin{subequations}
\begin{eqnarray}
\d4\d4\d4\d4\d4&& s_{b,g}\leq y_b\Lambda_{b,g}^\star, \;\;  s_{b,g}\in\B, \label{st_domain_0} \\
\d4\d4\d4\d4\d4&& \sigma^2 \+ { \sum\limits_{i\in \mathcal{B}} y_i \widehat{I}_{i,g}(\Lambda_{b,g}^\star) } \geq \frac{y_b\overline{P}_{b,g}(\Lambda_{b,g}^\star)}{z}-s_{b,g}M_{b,g}, \label{LB_con2_0}
\\
\d4\d4\d4\d4\d4&& \sigma^2\+\sum\limits_{i\in \mathcal{B}} y_i \widehat{I}_{i,g}(\Lambda_{i,g}^\star) <  \frac{y_{b}\overline{P}_{b,g}(\Lambda_{b,g}^\star)}{z}\+\left(1\! -\!s_{b,g}\right)M_{b,g}, \label{UB_con2_0}
\\
\d4\d4\d4\d4\d4&&  \sum_{b\in\mathcal{B}} s_{b,g}\log\!\left( p^{\text{blk}}_{b,g} \!+\! \gamma(1-p^{\text{blk}}_{b,g})  \right)  \leq  \log\left(\zeta\right), \label{SINR_Cons3_nonlinear_0}
\end{eqnarray}
\end{subequations}
where $z$ is the link SINR threshold in \eqref{Upperbound3} and    $M_{b,g}=2\sigma^2+ {\sum_{i\in \mathcal{B}} \widehat{I}_{i,g}(\Lambda_{i,g}^\star)}$.
\end{lemma}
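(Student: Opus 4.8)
The plan is to establish the equivalence between the nonlinear constraint \eqref{SINR_constraint_updated2} and the linear system \eqref{st_domain_0}--\eqref{SINR_Cons3_nonlinear_0} by showing that the auxiliary variable $s_{b,g}$ defined in \eqref{s_notation} is correctly forced by the big-$M$ constraints to equal the value of the complementary indicator, namely $s_{b,g} = 1 - \I_{\{\overline{\text{SINR}}_{b,g} < z\}}$ whenever the link is active, and $s_{b,g}=0$ when it is not. This is a standard big-$M$ linearization argument, and the crux is verifying that the chosen $M_{b,g}=2\sigma^2 + \sum_{i\in\mathcal{B}} \widehat{I}_{i,g}(\Lambda_{i,g}^\star)$ is large enough to render the appropriate constraint vacuous in each logical case, while still being tight enough to enforce the intended meaning in the complementary case.

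First I would address the constraint \eqref{st_domain_0}, $s_{b,g}\leq y_b\Lambda_{b,g}^\star$. Since $y_b\Lambda_{b,g}^\star = x_{b,g}^\star \in \B$, this simply forces $s_{b,g}=0$ whenever grid $g$ is not served by BS $b$, so that inactive links contribute nothing to the outage sum \eqref{SINR_Cons3_nonlinear_0}; when the link is active this bound is inactive and leaves $s_{b,g}$ free. Next I would analyze \eqref{LB_con2_0} and \eqref{UB_con2_0} for an active link ($y_b\Lambda_{b,g}^\star=1$). I would argue by cases on the value of $s_{b,g}$: setting $s_{b,g}=0$ makes \eqref{LB_con2_0} reduce to $\sigma^2 + \sum_i y_i\widehat{I}_{i,g} \geq \overline{P}_{b,g}/z$, i.e.\ $\overline{\text{SINR}}_{b,g}\leq z$ (SINR outage), while \eqref{UB_con2_0} with the $M_{b,g}$ term becomes slack; conversely $s_{b,g}=1$ makes \eqref{UB_con2_0} enforce $\overline{\text{SINR}}_{b,g}> z$ (no outage) while \eqref{LB_con2_0} is slack. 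The point is that exactly one choice of $s_{b,g}$ is consistent with the two inequalities for any given $\by$, and that choice matches \eqref{s_notation}, namely $1-s_{b,g}$ equals the outage indicator.

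The main obstacle I expect is verifying the sizing of $M_{b,g}$. I would check that $M_{b,g}=2\sigma^2 + \sum_{i}\widehat{I}_{i,g}(\Lambda_{i,g}^\star)$ dominates both the largest possible value of $\sigma^2+\sum_i y_i\widehat{I}_{i,g}$ (bounded by $\sigma^2 + \sum_i \widehat{I}_{i,g}$ since each $y_i\leq 1$ and each $\widehat{I}_{i,g}\geq 0$) and the largest value of $\overline{P}_{b,g}/z$ that can arise; this ensures that when $s_{b,g}=1$ the right-hand side of \eqref{LB_con2_0} becomes negative enough that the inequality holds trivially, and symmetrically when $s_{b,g}=0$ the $(1-s_{b,g})M_{b,g}$ term in \eqref{UB_con2_0} relaxes that inequality to a triviality. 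I would confirm the extra $\sigma^2$ buffer accounts for the strict-versus-nonstrict inequality distinction between \eqref{LB_con2_0} and \eqref{UB_con2_0}, which is what prevents both $s_{b,g}=0$ and $s_{b,g}=1$ from being simultaneously feasible at the boundary $\overline{\text{SINR}}_{b,g}=z$.

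Finally, having pinned down $s_{b,g}=y_b\Lambda_{b,g}^\star\bigl(1-\I_{\{\overline{\text{SINR}}_{b,g}<z\}}\bigr)$, I would substitute this into \eqref{SINR_Cons3_nonlinear_0} and compare term by term with the left-hand side of \eqref{SINR_constraint_updated2}. Using $\widehat{p}_{b,g}^{\text{SINR}}=(1-p^{\text{blk}}_{b,g})\I_{\{\overline{\text{SINR}}_{b,g}<z\}}$ from \eqref{Upperbound3}, I would verify that the logarithmic summand in \eqref{SINR_constraint_updated2} collapses to $\log\bigl(p^{\text{blk}}_{b,g}+\gamma(1-p^{\text{blk}}_{b,g})\bigr)$ precisely on the active, non-outage links (where $s_{b,g}=1$), and to $\log(1)=0$ on the outage links (where $s_{b,g}=0$ and the summand in \eqref{SINR_constraint_updated2} evaluates to $\log(p^{\text{blk}}_{b,g}+\gamma(1-p^{\text{blk}}_{b,g})+(1-\gamma)(1-p^{\text{blk}}_{b,g}))=\log(1)=0$). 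This shows the two left-hand sides coincide, so the threshold constraint against $\log(\zeta)$ is identical, completing the equivalence.
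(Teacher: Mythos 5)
Your overall route is the same as the paper's: force $s_{b,g}=0$ on inactive links via \eqref{st_domain_0}, read the two big-$M$ inequalities \eqref{LB_con2_0}--\eqref{UB_con2_0} as an encoding of the indicator in \eqref{s_notation}, and then collapse the logarithmic summand of \eqref{SINR_constraint_updated2} to $s_{b,g}\log\big(p^{\text{blk}}_{b,g}+\gamma(1-p^{\text{blk}}_{b,g})\big)$ by the two-case substitution (your observation that the outage case gives $\log(1)=0$ is exactly the paper's step \eqref{p_bg}--\eqref{log_tmp}). However, one step in your plan fails as stated: you claim to verify that $M_{b,g}=2\sigma^2+\sum_{i\in\cB}\widehat{I}_{i,g}(\Lambda_{i,g}^\star)$ dominates ``the largest value of $\overline{P}_{b,g}/z$ that can arise,'' which is precisely what you need so that \eqref{LB_con2_0} becomes vacuous when $s_{b,g}=1$. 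That domination is false in general: $\overline{P}_{b,g}(\Lambda_{b,g}^\star)/z=\frac{P_{\text{TX}}G_{\text{main}}}{zN_{\text{RF}}}\text{PL}_{b,g}$ carries the \emph{mainlobe} gain, whereas $M_{b,g}$ is built only from the noise power and \emph{sidelobe} terms $\widehat{I}_{i,g}\leq P_{\text{TX}}G_{\text{side}}\text{PL}_{i,g}$, and $G_{\text{side}}\ll G_{\text{main}}$ (a $24$ dB gap with the paper's own parameters). Concretely, if $y_b=1$, the link is active and well above threshold, and few other candidates are near grid $g$, then \eqref{LB_con2_0} with $s_{b,g}=1$ demands $M_{b,g}\geq \overline{P}_{b,g}(\Lambda_{b,g}^\star)/z-\sigma^2-\sum_{i}y_i\widehat{I}_{i,g}(\Lambda_{i,g}^\star)$, which the stated $M_{b,g}$ need not satisfy; in that regime \emph{neither} $s_{b,g}=0$ nor $s_{b,g}=1$ is feasible, so your claim that ``exactly one choice of $s_{b,g}$ is consistent with the two inequalities for any given $\by$'' breaks down, and strong-signal deployments $\by$ get wrongly excluded.

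Two remarks for context. First, your instinct to check both directions of the big-$M$ sizing is the right one and is sharper than the paper's own proof, which only invokes $M_{b,g}>\sigma^2+\sum_{i}y_i\widehat{I}_{i,g}(\Lambda_{i,g}^\star)$ --- the bound that makes \eqref{UB_con2_0} slack when $s_{b,g}=0$ --- and never addresses the slackness of \eqref{LB_con2_0} when $s_{b,g}=1$. With the stated $M_{b,g}$, only one implication is actually secured (feasibility of \eqref{st_domain_0}--\eqref{SINR_Cons3_nonlinear_0} implies \eqref{SINR_constraint_updated2}), i.e., the linearization is conservative rather than a genuine equivalence. Second, the gap is easy to repair: enlarge the constant, e.g., $M_{b,g}=2\sigma^2+\overline{P}_{b,g}(\Lambda_{b,g}^\star)/z+\sum_{i\in\cB}\widehat{I}_{i,g}(\Lambda_{i,g}^\star)$, which simultaneously exceeds $\sigma^2+\sum_i y_i\widehat{I}_{i,g}(\Lambda_{i,g}^\star)$ (so \eqref{UB_con2_0} is slack when $s_{b,g}=0$) and $\overline{P}_{b,g}(\Lambda_{b,g}^\star)/z-\sigma^2$ (so \eqref{LB_con2_0} is slack when $s_{b,g}=1$); with that modification your case analysis and the final substitution go through and deliver the claimed equivalence.
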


\begin{proof}
It is not difficult to observe that if $y_b\Lambda_{b,g}^\star =1$, the $s_{b,g}$ in \eqref{s_notation} is either $s_{b,g}=0$ or $s_{b,g}=1$, while if $y_b\Lambda_{b,g}^\star =0$, then $s_{b,g}=0$, leading to
\begin{eqnarray}
\label{st_domain}
s_{b,g} \leq y_b\Lambda_{b,g}^\star, \;\;    s_{b,g}\in\{0,1\}.
\end{eqnarray}
Because $M_{b,g}>\sigma^2 + { \sum_{i\in \mathcal{B}} y_i \widehat{I}_{i,g}(\Lambda_{i,g}^\star) }, \forall \by$, the indicator function in \eqref{s_notation} can be  equivalently expressed as the following two linear equations:
\begin{eqnarray}
\label{UB_con1}
\d4\d4\d4\d4 && \sigma^2+ { \sum\limits_{i\in \mathcal{B}} y_i \widehat{I}_{i,g}(\Lambda_{b,g}^\star) } \geq \frac{y_b\overline{P}_{b,g}(\Lambda_{b,g}^\star)}{z}-s_{b,g}M_{b,g}  \\ \nonumber
\d4\d4 \text{and} \d4\d4\d4\d4
 \\
\label{UB_con2}
\d4\d4\d4\d4 && \sigma^2 + {  \sum\limits_{i\in \mathcal{B}} y_i \widehat{I}_{i,g}(\Lambda_{b,g}^\star) } < \frac{y_b\overline{P}_{b,g}(\Lambda_{b,g}^\star)}{z}\!+\!(1\!-\!s_{b,g})M_{b,g}.\;\;
\end{eqnarray}

For the $s_{b,g}$ satisfying \eqref{st_domain}-\eqref{UB_con2}, the $\widehat{p}_{b,g}^{\text{ SINR}}$ in \eqref{Upperbound3} can be simplified to
\begin{eqnarray}
\widehat{p}_{b,g}^{\text{ SINR}}=(1-p_{b,g}^{\text{blk}}) \left(1\!-\!s_{b,g} \right).
\label{p_bg}
\end{eqnarray}
Plugging \eqref{p_bg} in the logarithm term
$\log (  p^{\text{blk}}_{b,g} + \gamma (1-p^{\text{blk}}_{b,g})  + (1 - \gamma) \widehat{p}_{b,g}^{\text{ SINR}}(\mathbf{y}))$ on the left-hand-side of (\ref{SINR_constraint_updated2}) and incorporating the two cases,  $s_{b,g}=0$ and $s_{b,g}=1$, into it lead to
\beq
s_{b,g}\log\!\left( p^{\text{blk}}_{b,g} \!+\! \gamma(1-p^{\text{blk}}_{b,g})  \right). \label{log_tmp}
\eeq
Therefore the UE outage constraint in  (\ref{SINR_constraint_updated2})
is succinctly
\begin{eqnarray} \nonumber
 \sum_{b\in\mathcal{B}} y_b\Lambda_{b,g}^\star s_{b,g}\log\!\left( p^{\text{blk}}_{b,g} \!+\! \gamma(1-p^{\text{blk}}_{b,g})  \right)  \!\leq\!  \log\left(\zeta\right), \; \forall g\!\in\!\mathcal{G},
\label{SINR_Cons3_nonlinear2}
\end{eqnarray}
which is still nonlinear with respect to the variables $y_b$ and $s_{b,g}$ because they are coupled.
However, $s_{b,g}\leq y_{b}\Lambda_{b,g}^\star$ in (\ref{st_domain}) implies
\begin{equation}
s_{b,g} \!=\! y_b\Lambda_{b,g}^\star s_{b,g},
\end{equation}
which is obtained by multiplying $s_{b,g}$ to the both sides of (\ref{st_domain}).
The linearized UE outage constraint is then given by
\begin{eqnarray}
 \sum_{b\in\mathcal{B}} s_{b,g}\log\!\left( p^{\text{blk}}_{b,g} \!+\! \gamma(1-p^{\text{blk}}_{b,g})  \right) \leq  \log\left(\zeta\right), \; \forall g\!\in\!\mathcal{G}.
\label{SINR_Cons3_nonlinear3}
\end{eqnarray}
In summary, the nonlinear constraint  (\ref{SINR_constraint_updated2})  can be replaced by the linear constraints (\ref{st_domain})-\eqref{UB_con2}  and (\ref{SINR_Cons3_nonlinear3}), which completes the proof.
\end{proof}

Lemma~\ref{Const_linearization} allows us to transform the INP in \eqref{Opt_subproblem2} to ILP:
\begin{eqnarray}
\label{Opt_subproblem2_1}
   \min\limits_{\mathbf{y}}  \sum_{b=1}^B c_by_b  
 \quad \text{subject to}\;\eqref{st_domain_0}, \eqref{LB_con2_0}, \eqref{UB_con2_0}, \eqref{SINR_Cons3_nonlinear_0}.
\end{eqnarray}

\subsection{Overall Algorithm}
\tcg{We now present our overall framework for finding the optimal solution to the minimum-cost BS deployment problem in \eqref{Opt_problem}. The optimal  $\{ {\Lambda}_{b,g}^\star \}$, $\forall b\in \cB, ~ \forall g\in \cG$, are obtained by Algorithm \ref{alg:R_search} that solves the problem in \eqref{BS_coverage} (equivalently, \eqref{sumproblem1}). After attaining the optimal association matrix $\bX^\star_{\by}$ as a function of the BS deployment vector $\by$}, the problem in \eqref{Opt_subproblem2_1} is solved to obtain the minimum-cost BS deployment $\by$. We notice that the ILP in \eqref{Opt_subproblem2_1} is a standard integer programming, which can be globally solved by the branch-and-bound (B$\&$B) method \cite{INP1}. 
Since it is a standard procedure and there are numerous efficient solvers (e.g., Gurobi \cite{Gurobi}), we omit the details here.

\subsubsection{Nulling Variables for the Computational Complexity Reduction}
\label{newVar_lowComp}
A drawback of the linearization in \eqref{Opt_subproblem2_1} (respectively, Lamma \ref{Const_linearization}) is
that  binary auxiliary variables $s_{b,g},\forall b\in\mathcal{B}, g\in\mathcal{G}$ are additionally introduced, which will stymie the computation of the  B\&B method.
Nevertheless, according to the fact that $s_{b,g}\leq y_b\Lambda_{b,g}$ in \eqref{st_domain}, we can reduce the number of variables by setting
$s_{b,g}=0$ when $\Lambda_{b,g}^\star=0$. Moreover, in the B\&B method, one can effectively reduce the number of branches; when an element $y_b$ in $\by$ is branched into $y_b=0$, values of the auxiliary variables $s_{b,g}, \forall b$ becomes zero. In this way, the increased computational complexity due to the introduced $\{s_{b,g}\}$ is reasonably reduced. Simulation results in the next section will corroborate these conclusions.

\begin{figure*}%
	\centering
	\subfloat[3D campus of the {University of Kansas}.]{{\includegraphics[width=6.5cm]{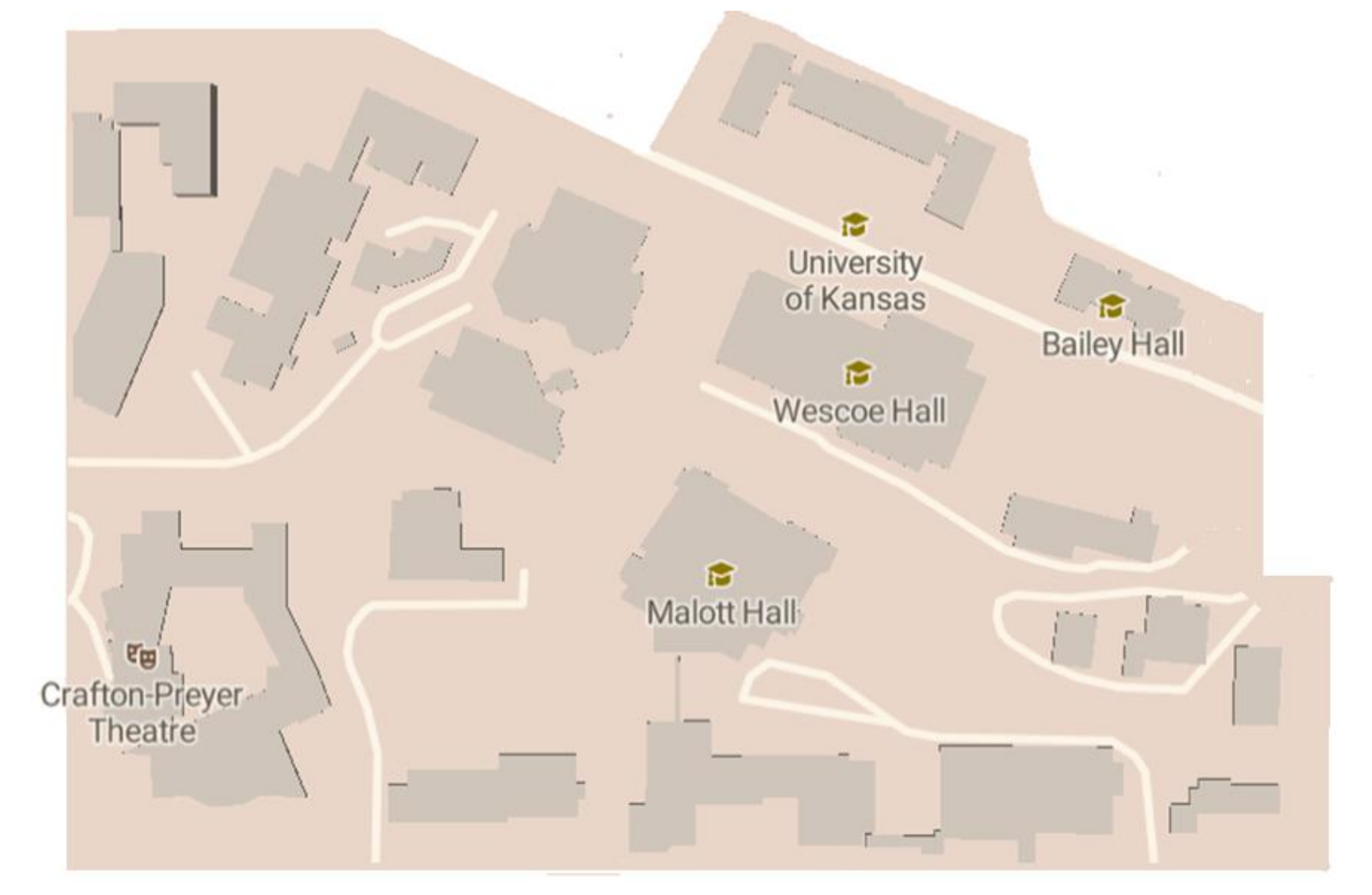} }}%
	\qquad
	\subfloat[Illustration of campus with different number $B$ of candidate BS locations (i.e., $B\!=\!240$ (blue circles), $ B\!=\!184$ (blue circles+red$\ast$), and  $ B\!=\!130$ (blue circles+red$\ast$+green triangle))  on the building walls and the square grids partitioning the outdoor campus. Five regions with the different active UE densities: the UE density at the $i$th region is   $\lambda_{\text{UE},g}^{(i)}\!=\!(2i+2)\times 10^{-4}$, $i=1, 2,\ldots, 5$.]{{\includegraphics[width=10cm]{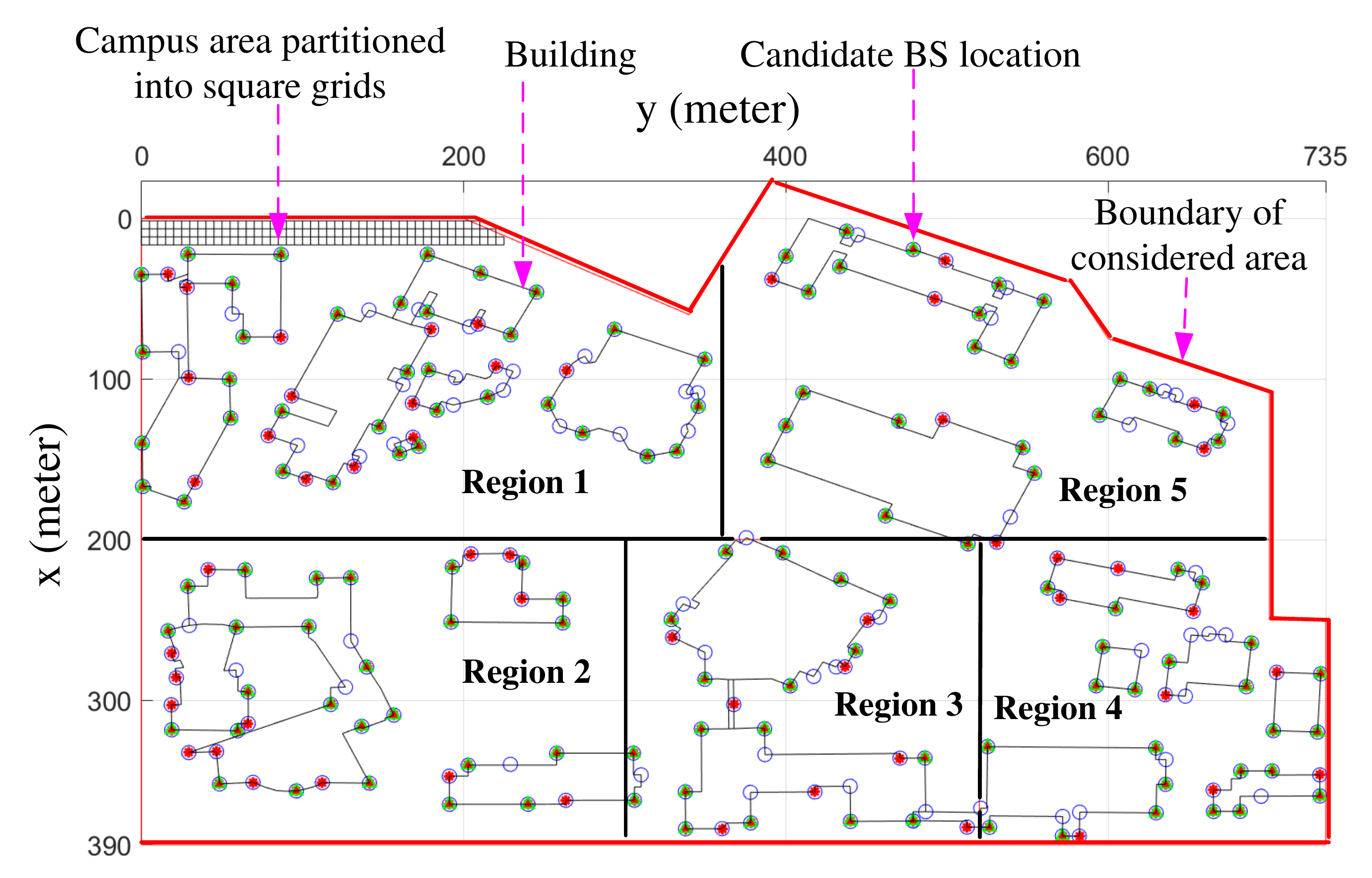} }}%
	\caption{Campus map of the University of Kansas for outdoor mmWave BS deployment evaluation.}%
	\label{fig:geometry2}
\end{figure*}

\begin{table}[t]
\scriptsize 
  \caption{Simulation Parameters }
  \centering
  \begin{tabular}{l|l}
\hline

\textbf{Variables and Description}&
\textbf{Values} \\
\hline
\hline
BS height &
$H_{\text{BS}}=10$ m\\
\hline
UE height  $H_{\text{UE}}$ &
$H_{\text{UE}}=1.5$ m\\
\hline
Square grid length $L_{\text{grd}}$&
$L_{\text{grd}}=5$ m\\
\hline
Numbers of candidate BS locations &
$B=240/184/130$\\
\hline
Numbers of grids &
$G=7393$\\
\hline
 Parameters for physical blockage in \eqref{LoS_Blockage}  &
$\alpha=0.08$, $\beta=0.08$
\\
\hline
Number of RF chains  $N_{\text{RF}}$&
 $N_{\text{RF}} = 12 $ \\
\hline
Maximum link distance $R^{\text{max}}$&
 $R^{\text{max}} = 200$ m \cite{Link_distance}\\
\hline
Link SINR threshold $z$ &
$z\!=\!1$ \\ 
\hline
UE access-limited blockage tolerance
& $\gamma=0.05$ \\
\hline
UE outage tolerance &
$\zeta = 0.05$\\
\hline
Mainlobe  and sidelobe  beam gain  &
$G_{\text{main}}\!=\!15$ dB and $G_{\text{side}} \!=\!-9$ dB \cite{BeamPattern} \\
\hline
BS transmit power&
$P_\text{Tx} = 1$ Watt \cite{BsTxPower}\\
\hline
Noise power $\sigma^2$&
$\sigma^2 = -104.5$ dBm \cite{NoisePower}\\
\hline
Tolerance $\epsilon$ in Algorithm~1&
$\epsilon=0.1$\\
\hline
\end{tabular}
\label{tab:1}
\end{table}


\section{Simulation Studies}     \label{Sec_Simulation}
\tcg{In this section, we numerically evaluate the proposed BS deployment scheme  in terms of the deployment cost, computational complexity, and UE outage performance. 
The geometry in Fig.~\ref{fig:geometry2} with  dimension $390$ m $\times$ 735~m is considered to evaluate the performance of the proposed BS deployment scheme. Different numbers of candidate BS locations (i.e., $B\!=\!240, B\!=\!184, B=\!130\!$ in Fig.~\ref{fig:geometry2}(b)) are considered to evaluate the tradeoff between the time complexity in solving the transformed minimum-cost BS selection subproblem \eqref{Opt_subproblem2_1} and the UE outage performance. Specific parameters of the geometry in Fig.~\ref{fig:geometry2} and the considered mmWave systems are summarized in   TABLE~\ref{tab:1}.}
Although different grid $g$ could  have different UE density $\lambda_{\text{UE},g}$, we divide the considered geometry  into five distinct regions for simplicity as shown in Fig.~\ref{fig:geometry2} and assume that the grids in the same region have the same UE density, in which the UE density of the $i$th region is described by  $\lambda_{\text{UE}}^{(i)}\!=\!(2i+2)\times 10^{-4}$, $i=1, \ldots, 5$.
Based on the model in Fig. \ref{fig:geometry2}, there are on average $165$ active UEs in the network. Considering the fact that the cost $c_b$ in \eqref{objective} of installing a BS in an area with higher UE density (e.g., urban area) is, in general, more expensive than that of lower UE density (e.g., rural area), we set the installation cost $c_b$ in the $i$th region as $0.2i$, $i=1, \ldots, 5$ for simplicity.

\subsection{\tcg{Benchmark Schemes}}  \label{subsection: paramSetting}
\tcg{We will compare our proposed BS deployment algorithm against the site-specific mmWave BS deployment strategies below. }

\begin{itemize}
   \item \tcg{\emph{Macro Diversity-Constrained Problem  (MDP)}:
The MDP is formed by minimizing the BS deployment cost in \eqref{objective} and by requiring each grid to be covered by at least two BSs: 
    \begin{eqnarray}
\label{Opt_problem:MDP}
 \min\limits_{\mathbf{y}}  \d4\d4 &&  \sum_{b=1}^B c_b y_b \\ \nonumber
 \text{subject to} \d4\d4 &&  \sum_{b=1}^B x_{b,g} \geq 2, \; \forall g\in\cG. 
\end{eqnarray}
The constraint provides a macro diversity guarantee to each grid, which can be effectively used to manage the physical blockage. 
The MDP in \eqref{Opt_problem:MDP} is ILP. 
Thus, it can be efficiently solved by using available solvers. }

  \item \tcg{\emph{Average Signal Strength-Guaranteed Problem (ASSGP) in \cite{mmWaveSiteDeploy}}: The underlying idea is to distribute BSs to guarantee a certain level of average received signal strength (RSS). The ASSGP is therefore formed by adding an additional constraint, setting a  threshold for the RSS of each UE, to the MDP in \eqref{Opt_problem:MDP} below: 
    \begin{eqnarray}
\label{Opt_problem:ASGP}
 \min\limits_{\mathbf{y}}  \d4\d4 &&  \sum_{b=1}^B c_b y_b \\ \nonumber
 \text{subject to} \d4\d4 &&  \sum_{b=1}^B x_{b,g} \geq 2, \; \forall g\in\cG \\ \nonumber
  && \frac{1}{\sum_{b=1}^B x_{b,g}} \sum_{b=1}^B x_{b,g}\text{RSS}_{b,g} \geq \text{RSS}_{\text{th}}, \forall g\in\cG,
\end{eqnarray}
where $\text{RSS}_{b,g}=P_{b,g}+G_{\text{main}}-\text{PL}_{b,g}(r_{b,g})$ in dB is the RSS of the link from BS $b$ to grid $g$ with distance $r_{b,g}$, and the RSS threshold is set to $\text{RSS}_{\text{th}}\!=\!-90$ dB. 
The problem in \eqref{Opt_problem:ASGP} is solved by using the heuristic approach proposed in \cite{mmWaveSiteDeploy}.}

  \item \tcg{\emph{Blockage-Guaranteed Greedy Approach (BGGA)}: 
  In this benchmark, we form a strategy that focuses on providing blockage tolerance. This can be done by replacing the  UE outage constraint (\ref{UE_outage}) of our proposed problem in \eqref{Opt_problem} to \begin{eqnarray}
  \sum_{b\in \cB} x_{b,g} \log\Big(  p^{\text{blk}}_{b,g} \!+\! \gamma\left(1-p^{\text{blk}}_{b,g}\right)   \Big) \!\leq\! \log\left(\zeta\right),\; \forall g\in\cG,
  \label{SimplifiedUeOutage}
  \end{eqnarray}
  which is obtained by removing the SINR outage probability in \eqref{UE_outage}. Similar to the proposed algorithm, we decompose the problem into the two separable subproblems. The BS coverage optimization subproblem is first solved by using the algorithm in  Section-\ref{subproblem1}. To solve the minimum-cost subset BS selection subproblem with the constraint in \eqref{SimplifiedUeOutage}, we adopt the greedy algorithm (GA) proposed in \cite{MmWaveBsDeploy_UEOre}. In the GA, a new BS is added per iteration that guarantees the constraint (\ref{SimplifiedUeOutage}) for the largest number of grids while minimizing the BS deployment cost. The iteration ends when (\ref{SimplifiedUeOutage}) holds for all $G$ grids.}
  
\end{itemize}


\begin{figure}
  \centering
  \includegraphics[width=0.45\textwidth]{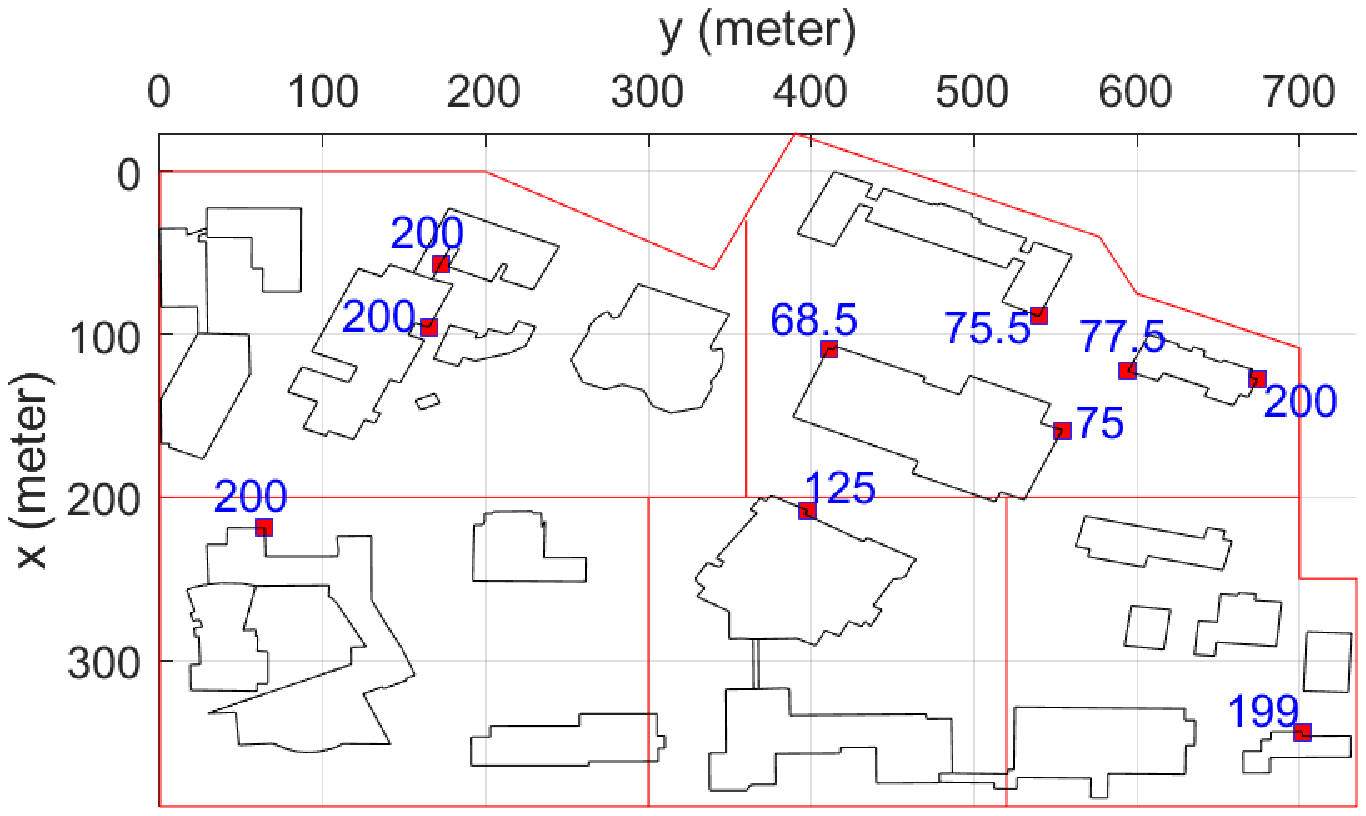}
  \caption{Maximum link distance $r_b^{\text{max}}$ in \eqref{BS_coverage} for $10$  candidate BS locations. The $r_b^{\text{max}}$ values are best represented by the blue fonts.  }
  \label{fig:Prooptimal_CovRadius}
\end{figure}

\subsection{Performance Evaluation}
\begin{figure*} 
	\centering  
	\subfloat[MDP scheme,   $61$ deployed BSs with cost $\sum_{b=1}^B c_by_b=32.8$.]    {\includegraphics[width=8.6cm]{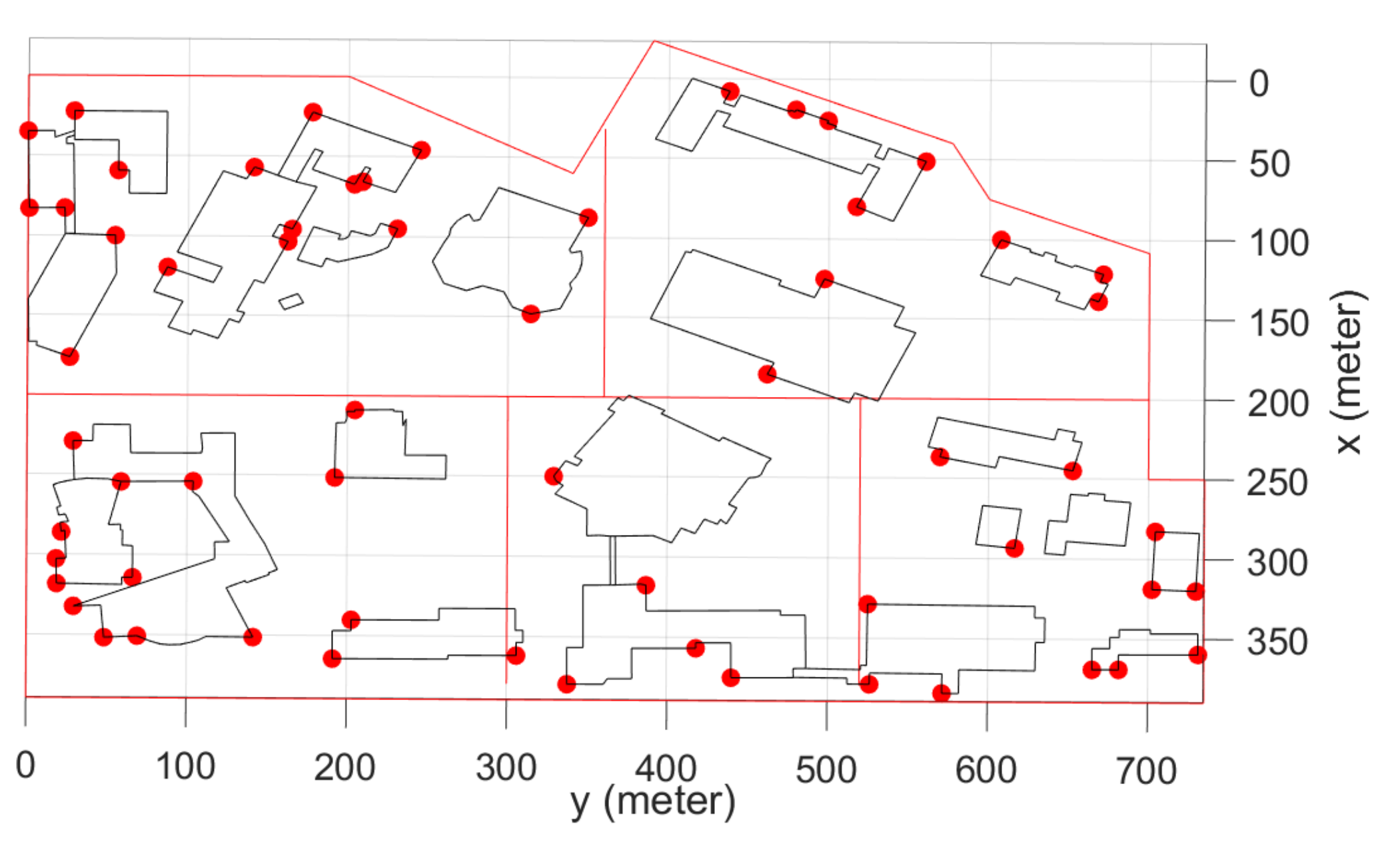}}
	\quad
	\subfloat[ASSGP scheme,   $84$ deployed BSs with cost $\sum_{b=1}^B c_by_b=45$.]  {\includegraphics[width=8.4cm]{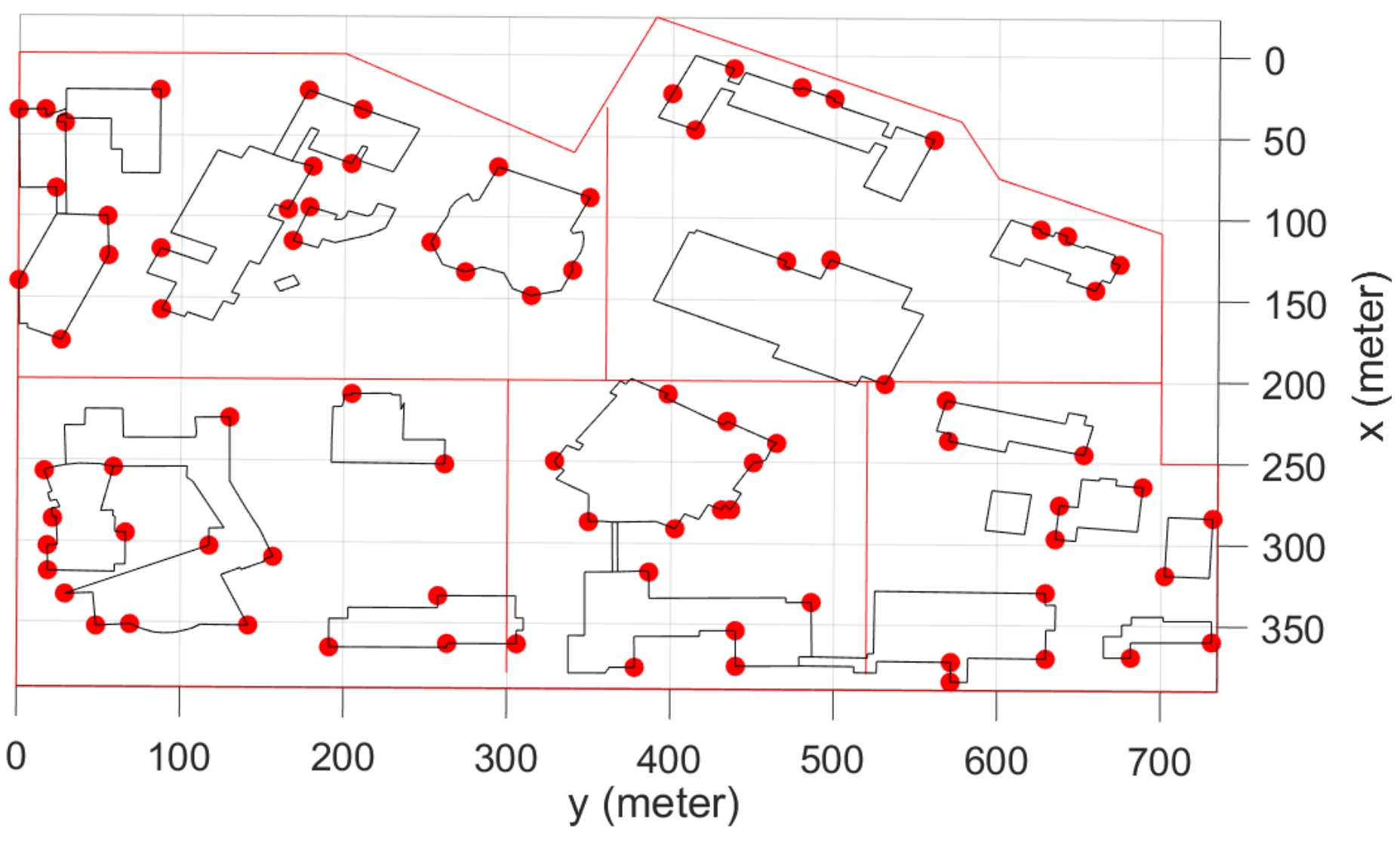}}
	
	\subfloat[BGGA scheme,   $102$ deployed BSs with cost $\sum_{b=1}^B c_by_b=55$.]    {{\includegraphics[width=8.6cm]{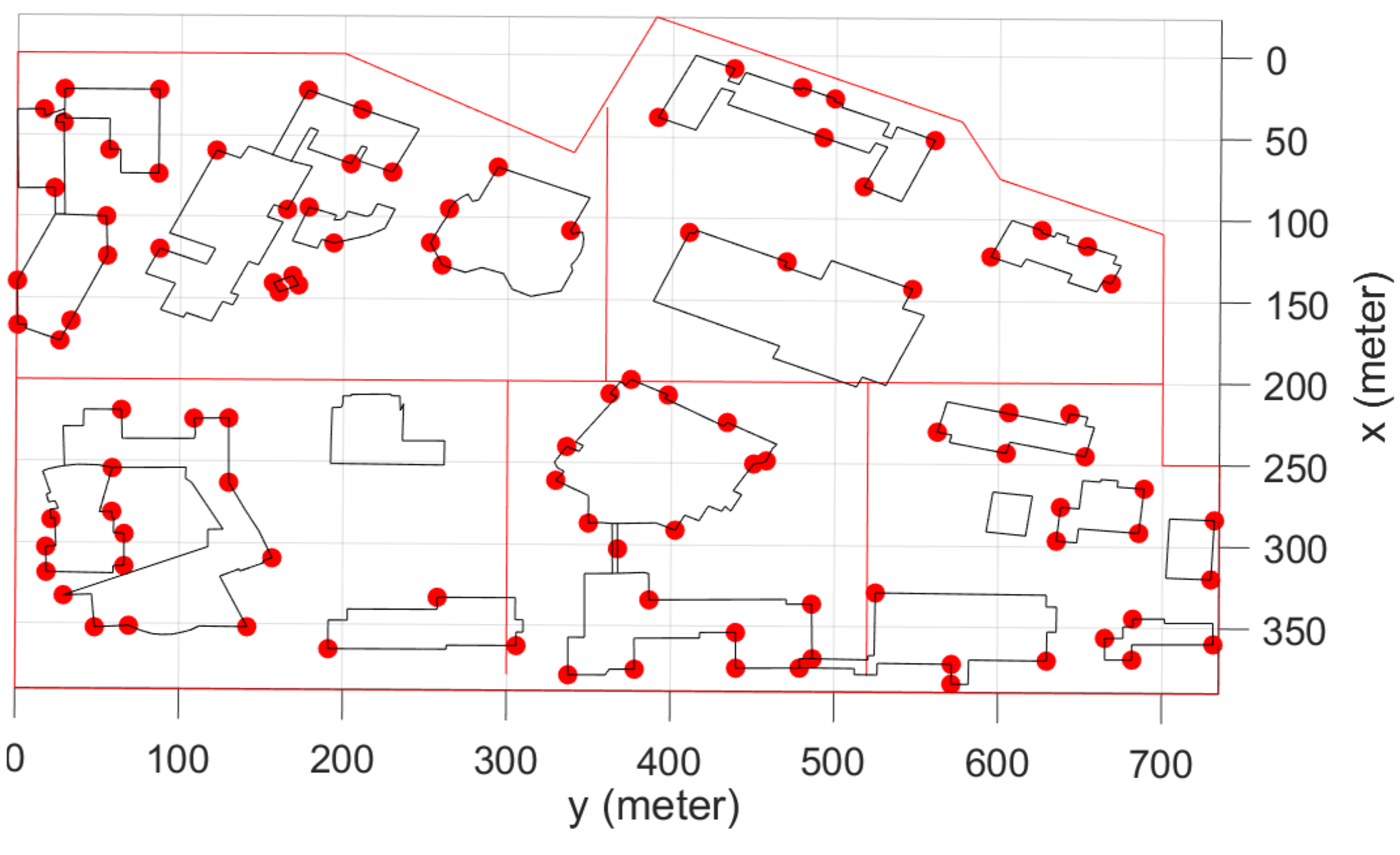}}}
	\quad
	\subfloat[Proposed scheme,   $89$ deployed BSs with cost $\sum_{b=1}^B c_by_b=48.4$.]  {{\includegraphics[width=8.5cm]{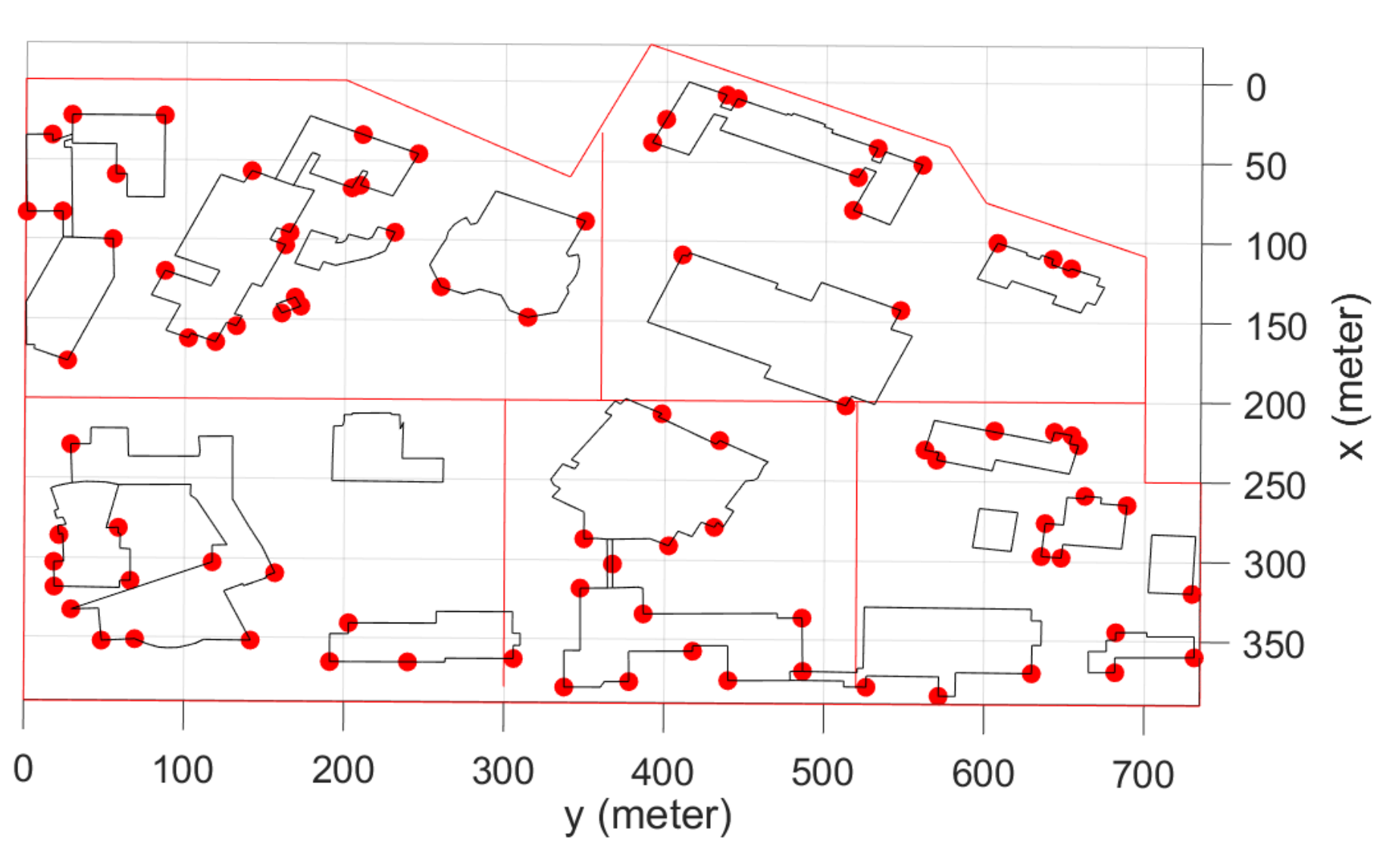}}}
	\caption{BS deployment results of the benchmark MDP, ASSGP, BGGA, and proposed scheme under $B\!=\!240$ candidate locations.}
	\label{BSDeployResults}
\end{figure*}

In this subsection, we present the BS deployment results obtained by the proposed scheme and the benchmarks MDP, ASSGP, and BGGA. Using these results, we evaluate and compare the link SINR and UE outage for different schemes. We begin with highlighting the result of the Algorithm~1 that yields the maximum link distance for each candidate BS location.

\subsubsection{BS Coverage Maximization}
Fig.~\ref{fig:Prooptimal_CovRadius} presents the maximum link distance $r_b^{\text{max}}$ values in \eqref{BS_coverage} at the ten different  candidate BS locations, obtained by  Algorithm~\ref{alg:R_search}. 
Because the UEs in Region $5$ have the highest density due to the increased UE access-limited blockage, those candidate locations have relatively small coverage radii\footnote{Those values are $59, 63, 78$, and $200$ meters in Fig.~\ref{fig:Prooptimal_CovRadius}. Since the BS candidate at the boundary of Region $5$ is LoS-visible to a limited number of grids, it has the maximum link distance $200$ meters.} compared to other regions. In contrast, the candidate BS locations in the open areas of Regions $1$ and $2$ are LoS-visible to many grids but most of the maximum link distances are larger than $100$ meters due to the relatively low UE density.
This observation reveals that the maximum link distance depends on both the UE density and the nearby geometry.

\subsubsection{BS Deployment Cost}
Given $B=240$ candidate BS locations shown  as blue circles in Fig.~\ref{fig:geometry2}, the  BS deployment results of the proposed method and the benchmark MDP, ASSGP, and BGGA, are displayed in Fig.~\ref{BSDeployResults}.
The numbers of the deployed BSs of the proposed, MDP, ASSGP, and BGGA schemes are given by $89$, $61$, $84$, and $102$,   respectively.
\tcg{The proposed scheme and BGGA yield larger numbers of deployed BSs due to the UE access-limited blockage constraint, in which a BS $b$ has a  maximum link distance $r_b^{\text{max}}\leq R^{\text{max}}$ as in Fig~\ref{fig:Prooptimal_CovRadius} and can only cover a limited number of grids.
Among the four strategies, the MDP in Fig.~\ref{BSDeployResults}(a) deploys the least number of BSs.   
This happens because the MDP criterion 
merely focuses on extending the LoS link distance to ensure the macro diversity constraint in \eqref{Opt_problem:MDP}. 
In contrast, ASSGA, BGGA, and the proposed scheme attempt to evenly distribute the BSs.
This is because the average RSS constraint   \eqref{Opt_problem:ASGP} in ASSGA, the blockage constraint \eqref{SimplifiedUeOutage} in BGGA, and the UE outage constraint (\ref{UE_outage}) in the proposed scheme control the link distance so that a UE  
 far from its serving BS experiences unsatisfactory link  performance.
 It is not difficult to observe that the BS deployment obtained by the proposed scheme is feasible to the BGGA because the left-hand-side  of \eqref{SimplifiedUeOutage} is a lower bound of that of \eqref{UE_outage} and the BGGA problem is suboptimally solved by the greedy approach in \cite{MmWaveBsDeploy_UEOre},  explaining the inferior performance of BGGA compared to the proposed scheme.}

\begin{figure}
  \centering
  \includegraphics[width=0.5\textwidth]{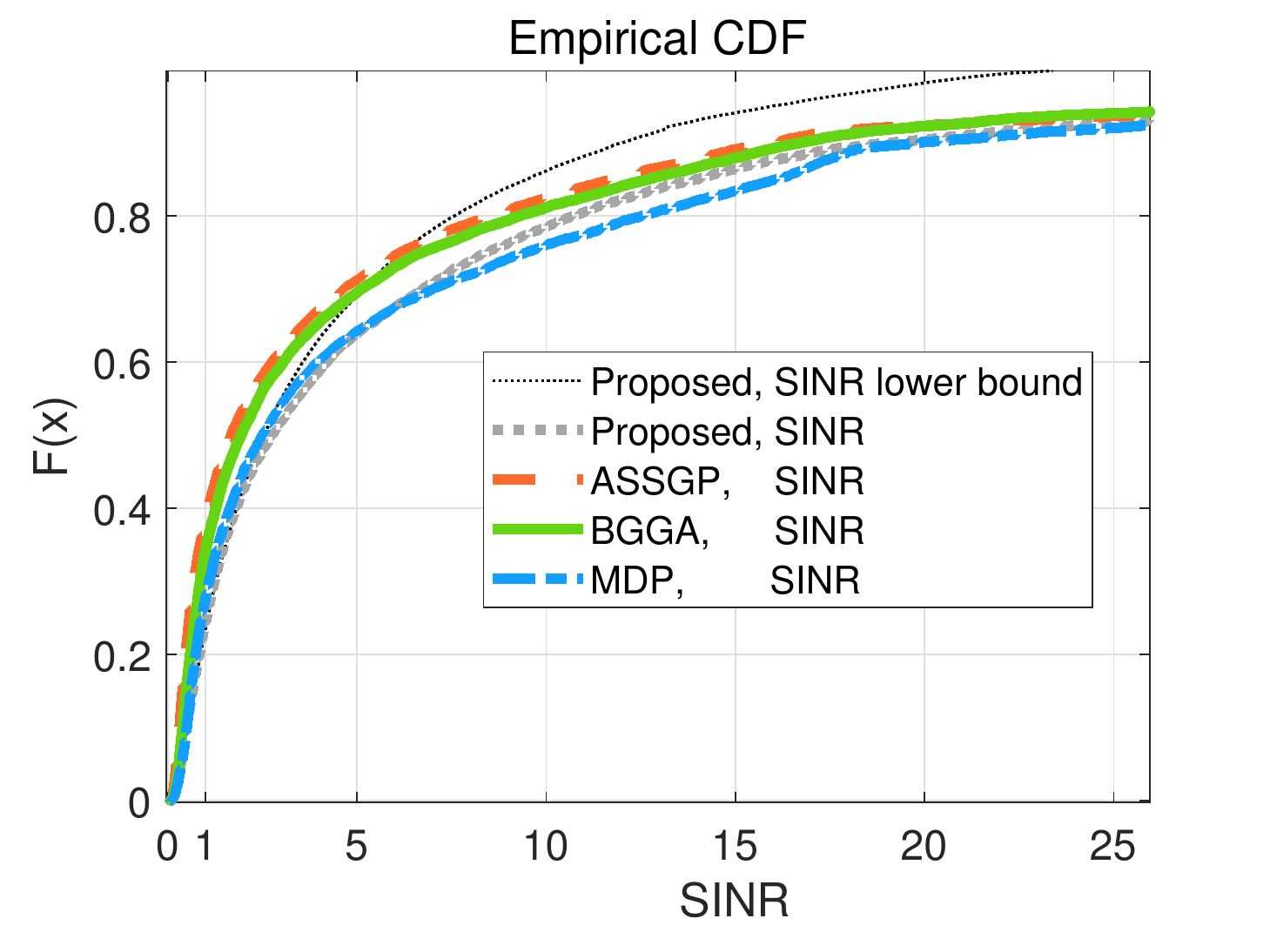}
  \caption{CDF of the links' SINR and its SINR lower bound $\overline{\text{SINR}}_{b,g}(\mathbf{y}^{\star},\mathbf{X}^{\star})$ in \eqref{SINR_Exp}.}
  \label{fig:SINRCDF}
\end{figure}

\subsubsection{SINR Performance}
 \tcg{For each link from BS $b$ to grid $g$ ($x_{b,g}=1$), we collect the $\text{SINR}_{b,g}(\mathbf{y}^{\star},\mathbf{X}^{\star})$ values for $50$ random realizations for the results in Fig.\ref{BSDeployResults}. 
 The cumulative distribution functions (CDFs) of the collected $\left\{\text{SINR}_{b,g}(\mathbf{y}^{\star},\mathbf{X}^{\star})\right\}$ of the proposed and benchmark schemes are displayed  in Fig.~\ref{fig:SINRCDF}, where $F(x)$  denotes the CDF of the data that is smaller than or equal to the abscissa $x$. For the proposed scheme, the CDF of the SINR lower bound $\overline{\text{SINR}}_{b,g}(\mathbf{y}^{\star},\mathbf{X}^{\star})$ in \eqref{SINR_Exp} is also plotted. 
It is observed that the MDP reveals the best SINR performance due to the smallest number of deployed BSs and thus lower network interference level.
However, this outpacing result is due to the ignorance of the physical and UE access-limited blockages, resulting in the worst UE outage performance as will be shown in Fig. \ref{fig:outage_statistic}. 
The proposed scheme, deploying $89$ BSs, has a slightly larger number of the deployed BSs than the ASSGP ($84$ BSs),  but has better SINR performance since ASSGP does not account for the SINR outdage during its deployment. The BGGA that deploys the largest number of BSs (i.e., $102$ BSs) reveals a worse SINR performance than that of the proposed scheme due to the increased network interference. The SINR outage for a connected link occurs when the SINR is lower than a threshold $z=1$ in TABLE \ref{tab:1}. It is noticed from Fig.~\ref{fig:SINRCDF} that $77\%$ of the links have $\overline{\text{SINR}}_{b,g}(\mathbf{y}^{\star},\mathbf{X}^{\star})\geq z=1$, while $78\%$ of the links have ${\text{SINR}}_{b,g}(\mathbf{y}^{\star},\mathbf{X}^{\star})\geq z=1$. 
This reveals that the SINR outage upper bound in \eqref{SINRoutage_upperbound} (i.e., the SINR lower bound in \eqref{SINR_Exp}) is tight for at least $77\%$ links. However, seen from Fig.~\ref{fig:SINRCDF}, the gap between the bound and true value grows as the SINR  increases. }



\begin{table*}[h]
\caption{Proposed BS deployment under different parameter settings.
}
\centering
\begin{tabular}{|c|l|c|c|c|}
\hline
 Different $B$ & Parameter setting                                                     & Number of BSs   & Deployment cost & Running time (minutes)  \\ 
\hline
$B=240$  & $N_{RF}\!=\!12$, $\gamma\!=\!0.05, \zeta\!=\!0.05$    & 89  & 48.4 & 45  \\ 
\hline
$B=240$ & $N_{RF}\!=\!14$,
$\gamma\!=\!0.05, \zeta\!=\!0.05$    & 88  & 47.6 & 77   \\
\hline
$B=240$ & $N_{RF}\!=\!14$, $\gamma\!=\!0.05, \zeta\!=\!0.1$               & 82         & 44.6 & 78     \\
\hline
$B=184$ & $N_{RF}\!=\!12$, $\gamma\!=\!0.05, \zeta\!=\!0.05$          & Infeasible    & Not available & Not available  \\
\hline
$B=184$ & $N_{RF}\!=\!14$, $\gamma\!=\!0.05, \zeta\!=\!0.05$               & Infeasible       & Not available & Not available        \\
\hline
$B=184$ & $N_{RF}\!=\!14$, $\gamma\!=\!0.05, \zeta\!=\!0.1$               & 84       & 45.4 & 51        \\
\hline
$B=130$ & $N_{RF}\!=\!12$, $\gamma\!=\!0.05, \zeta\!=\!0.05$          & Infeasible   & Not available & Not available  \\
\hline
$B=130$ & $N_{RF}\!=\!14$, $\gamma\!=\!0.05, \zeta\!=\!0.1$               & Infeasible      & Not available & Not available         \\
\hline
$B=130$ & $N_{RF}\!=\!14$, $\gamma\!=\!0.05, \zeta\!=\!0.2$               & 50    & 21.2 & 19           \\
\hline
\end{tabular}
\label{tab:ProposedResult_diffSet}
\end{table*}

\subsubsection{Varying Number of Candidate BS Locations}
In TABLE~\ref{tab:ProposedResult_diffSet}, we present the results of the proposed BS deployment for different numbers of candidate BS locations as in Fig.~\ref{fig:geometry2} and for different sets of  parameters.
It can be observed that increasing the number of RF chains $N_{\text{RF}}$ decreases  the number of deployed BSs. This is because a BS with a larger $N_{\text{RF}}$ can afford  a larger $\Phi$ value in  \eqref{Re_NumUE_PerBS} and thereby, covers more grids. \tcg{Moreover, it is noticed that the number of candidate BS locations $B$ also impact to the BS deployment results.
When $N_{RF}\!=\!14$, $\gamma\!=\!0.05, \zeta\!=\!0.1$ in TABLE~\ref{tab:ProposedResult_diffSet}, two more BSs are deployed when $B=184$ due to the reduced search space for BS deployment compared to the case  when $B=240$. 
However, as we reduce the number of candidate BS locations $B$, it raises the infeasibility issue of the proposed BS deployment scheme as shown in TABLE~\ref{tab:ProposedResult_diffSet}. }

\subsubsection{Time Complexity} TABLE~\ref{tab:ProposedResult_diffSet} also displays the time complexity of the  proposed scheme. The time overhead is measured in minutes using Gurobi \cite{Gurobi}.
Compared to the time complexities of MDP, ASSGP, BGGA when $B=240, N_{RF}\!=\!12$, $\gamma\!=\!0.05, \zeta\!=\!0.05$, whose running times are $1$ minutes, $5$ minutes, and $12$ minutes, respectively, time complexity of the proposed scheme is exceedingly high. However, considering the fact that  our proposed scheme optimally solves the INP in \eqref{Opt_problem} and runs off-line, it is not a serious drawback. As aforementioned, solving the INP in \eqref{Opt_problem} for the large-scale problem size in TABLE~\ref{tab:1}  ($B\times G\!=\!1,774,320$) is difficult if not impossible. \tcg{Directly solving them using available solvers often encounters memory outage or never-ending running time.
Although the proposed linearization technique in Lemma~\ref{Const_linearization} introduces twice more additional variables $s_{b,g}, \forall b, g$ than the benchmarks, implementing the variable nulling strategy in Section~\ref{newVar_lowComp} effectively alleviate the computational overhead.}

\begin{figure}
  \centering
  \includegraphics[width=0.5\textwidth]{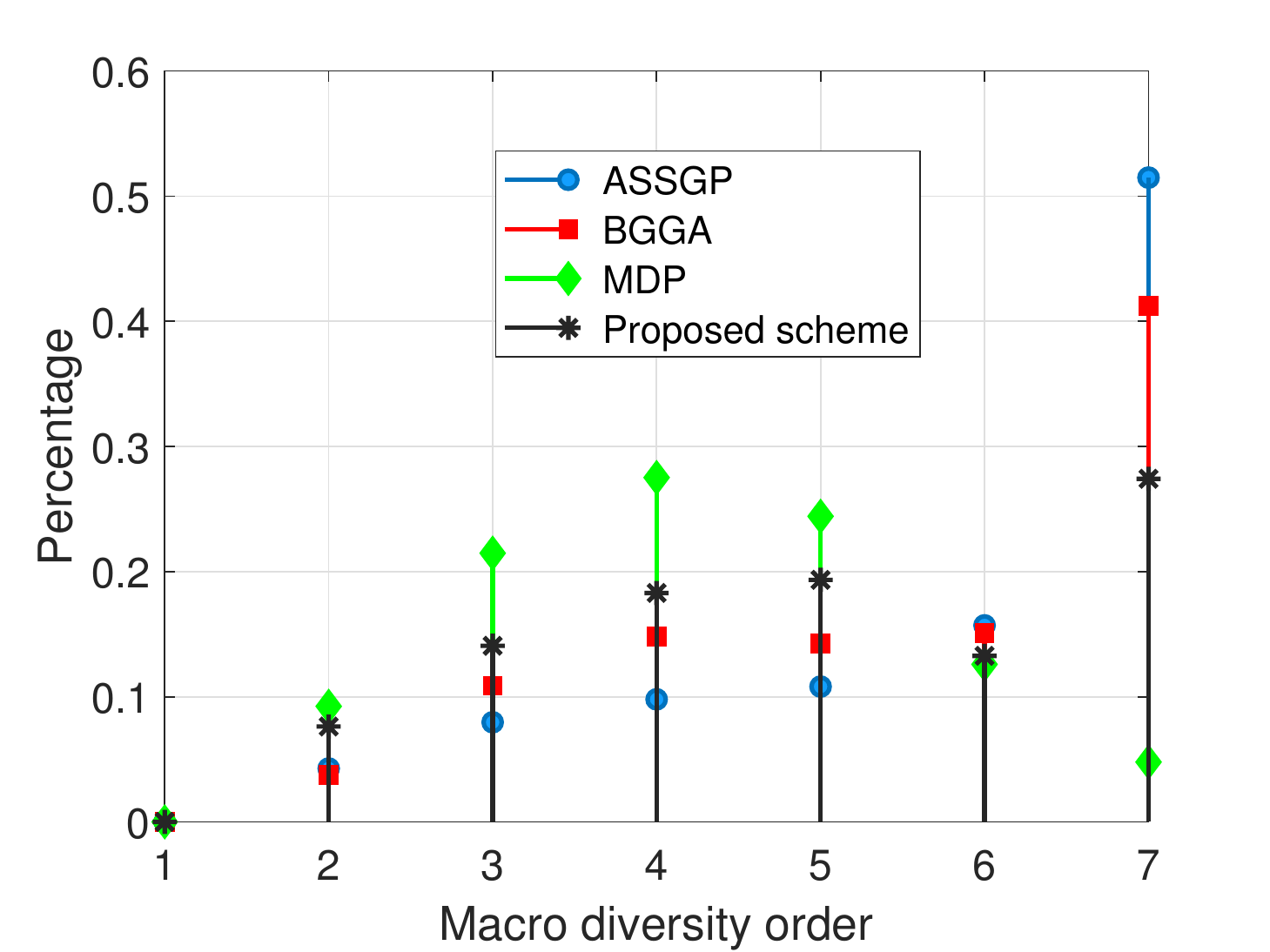}
  \caption{Probabilities of the macro diversity order for the MDP, ASSGP, BGGA, and proposed scheme.}
  \label{fig:NumAssoci_UE_statistic}
\end{figure}

\subsubsection{Macro Diversity Order Distribution} The macro diversity orders  $\sum_{b\in\mathcal{B}}  x_{b,g},\! \forall g\!\in\!\mathcal{G}$ in \eqref{macro_diversity}
of each scheme  are collected and  $\Pr\left(\sum_{b} x_{b,g}=i  \right)$ are presented in Fig.~\ref{fig:NumAssoci_UE_statistic} for the deployment results in Fig.~\ref{BSDeployResults}.
Note that all schemes guarantee a minimum macro diversity order $2$, which is a constraint for the MDP and ASSGP, and an implicit requirement for the BGGA and proposed scheme for UE outage mitigation.  
Without the UE access-limited blockage constraint, deployed BSs of MDP can cover any LoS-visible grids within $R^{\max}$ and hence it deploys the minimum number of BSs (i.e., $61$ BSs) to produce the largest $\Pr\left(\sum_{b} x_{b,g}=i  \right)$  at $i=2,3,4,5$ as seen in Fig.~\ref{fig:NumAssoci_UE_statistic}.   
While the proposed scheme has a similar (respectively, smaller) number of deployed BSs to the ASSGP (than the BGGA), its $\Pr\left(\sum_{b} x_{b,g}=i  \right)$  at $i=2,3,4,5$ is larger than those of the ASSGP and BGGA, which demonstrates the superior performance of the proposed scheme compared to the benchmarks in terms of providing UE outage guarantees; this will be clear in Fig.~\ref{fig:outage_statistic}.

\subsubsection{UE Access-Limited Blockage Probability}
In Fig. \ref{fig:CapBlock_statistic}, given the
BS deployment results in Fig.~\ref{BSDeployResults}, we collect the UE access-limited blockage probabilities for each BS and demonstrate the CDFs  of 
$ \left\{\rho_{b, g}(\bx_b) \; \text{in}\; \eqref{express2} \;\; \forall b\in\cB \; \text{with}\; y_{b}\!=\!1 \right\}$.
It can be observed from the curves that the proposed scheme and BGGA deploy the BSs to guarantee that each UE's access-limited blockage probability is limited by the tolerance $\gamma\!=\!0.05$ in TABLE \ref{tab:1}.
However, around $30\sim 40\%$ of the deployed BSs with the MDP and ASSGP schemes have the UE access-limited blockage probability larger than $0.05$. \tcg{It should be emphasized that this stark guarantee is achieved by deploying $89$ BSs of the proposed scheme, while the MDP, ASSGP, and BGGA deploy $61$, $84$, and $102$ BSs, respectively.}

\begin{figure}[htbp]
  \centering
  \includegraphics[width=0.5\textwidth]{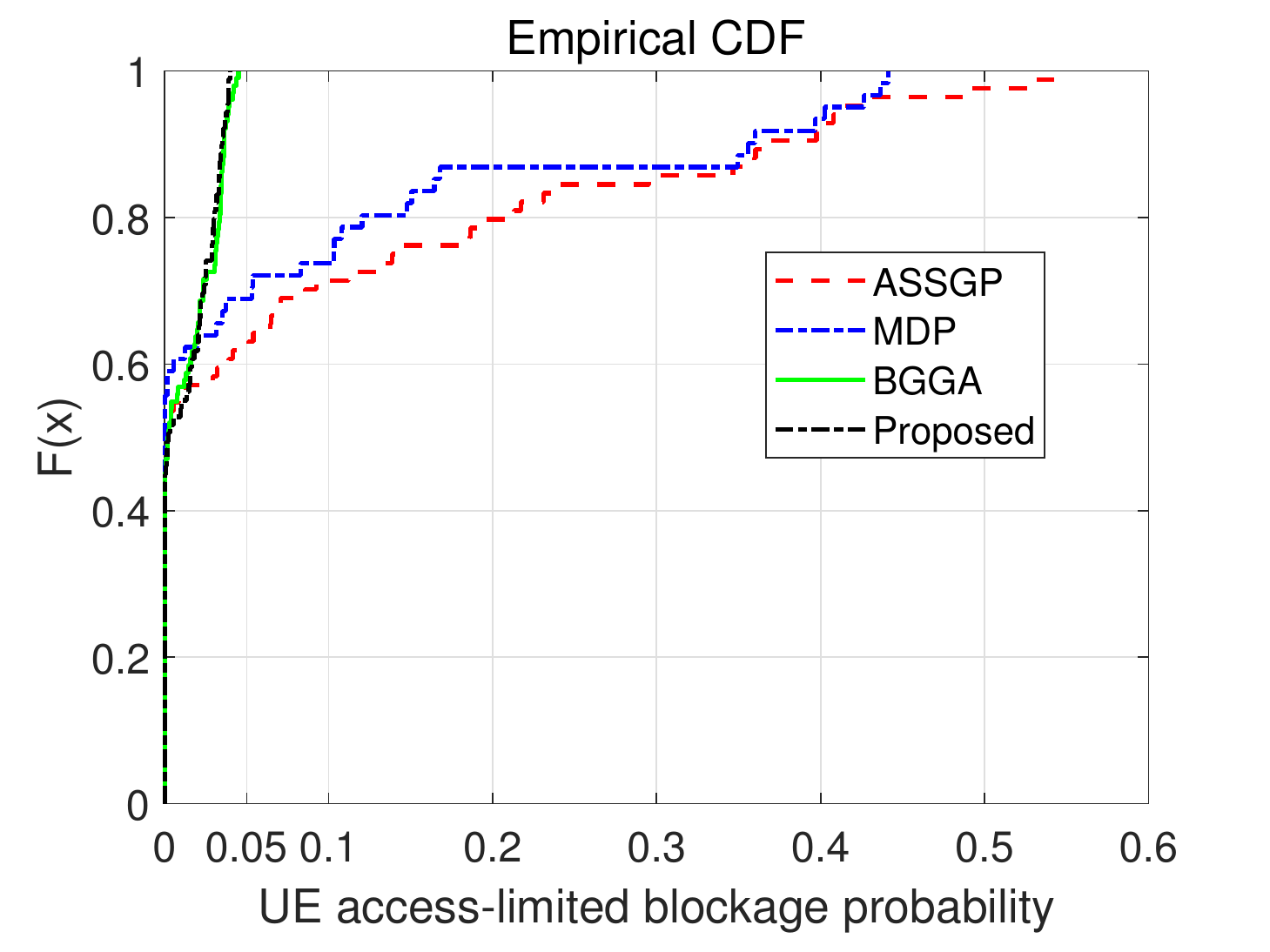}
  \caption{CDF of the collected UE access-limited blockage statistics, i.e., CDFs of  $\left\{\rho_{b, g}(\bx_b) \; \text{in}\; \eqref{express2} \;\; \forall b\in\cB \; \text{with}\; y_{b}\!=\!1 \right\}$.}
  \label{fig:CapBlock_statistic}
\end{figure}

\subsubsection{UE Outage Probability} \label{subsubsec:UE Outage}

\tcg{Based on the BS deployment results in Fig.~\ref{BSDeployResults}, we collect the UE outage probabilities in \eqref{UE_outage}.
The CDFs of the collected UE outage probabilities are demonstrated in Fig.~\ref{fig:outage_statistic}.
It is evident that the proposed scheme  guarantees the UE outage probability with the specified tolerance $\zeta\!=\!0.05$ in TABLE \ref{tab:1}. Even through the ASSGP  deploys the similar number of BSs as the proposed scheme, its UE outage performance is much worse than the proposed scheme and nearly $12\%$ UEs have outage probability larger than $0.1$.
The BGGA exhibits the similar UE outage statistics to the proposed scheme. However, it fails to provides the guarantee and its performance is achieved by deploying $13$-more BSs than the proposed scheme. }

\begin{figure}
  \centering
  \includegraphics[width=0.5\textwidth]{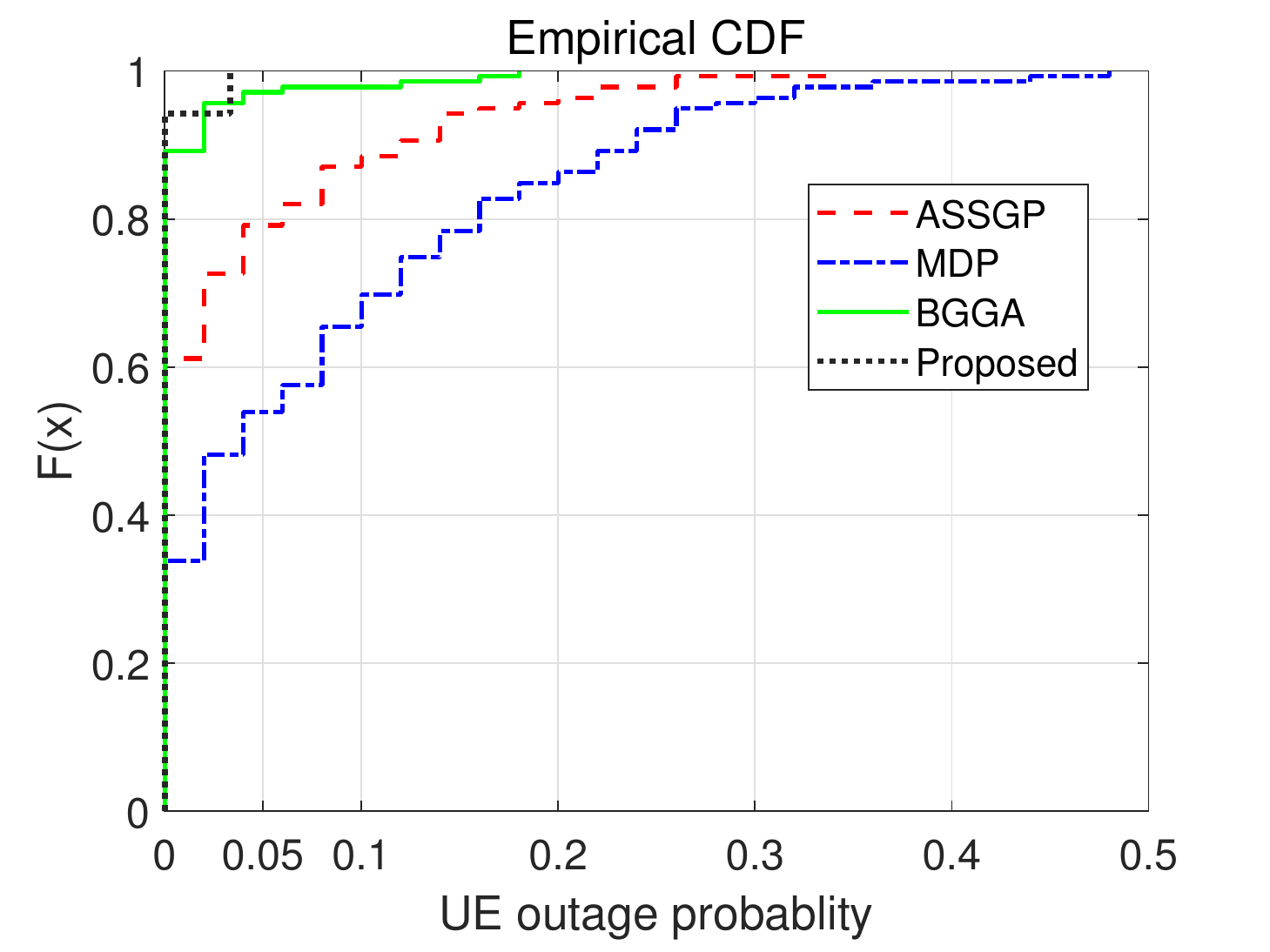}
  \caption{CDF of the collected UE outage probabilities.}
  \label{fig:outage_statistic}
\end{figure}

\section{Conclusions}    \label{Sec_con}
We addressed important mmWave connectivity challenges in a 3D urban geometry by proposing  a link quality-guaranteed minimum-cost mmWave BS deployment scheme that  jointly optimizes the {BS placement} and {cell coverage}.
To mathematically formulate the problem, we first introduced the stochastic mmWave link state model and used it to characterize the BS association and UE outage constraints.
The BS deployment problem was then formulated as INP, which was optimally solved by decomposing it into two separable subproblems: (i) BS coverage optimization problem and (ii) minimum subset BS selection problem.
\tcg{We provided the optimal solutions for these subproblems as well as their theoretical justifications.
Simulation results demonstrated the efficacy of the proposed scheme in terms of the BS deployment cost,  computational complexity, UE access-limited blockage, and UE outage performance. Compared to the MDP, ASSGP, and BGGA benchmarks, our proposed algorithm provides guaranteed tolerance to UE access-limited blockage and UE outage. It should be noted here that our main goal in this work was to study the principle of minimum-cost BS deployment for combined coverage and link quality constraints in mmWave networks, and through simulations describe the gain that can be expected by taking on such an approach.
One major drawback of the proposed scheme was that the time complexity is exceedingly high compared to other benchmarks.
However, considering the fact that the BS deployment planning is done off-line in practice, our proposed scheme optimally solves the INP in \eqref{Opt_problem}, and the proposed scheme provided stark outage guarantees, the high time complexity is not a serious drawback.}

\small
\bibliographystyle{IEEEtran}
\bibliography{IEEEabrv,survey_ref_phy}

\end{document}